\documentclass[a4paper,USenglish]{article}

\usepackage[utf8]{inputenc}
\usepackage[T1]{fontenc}
\usepackage{xcolor}

\usepackage{amsmath,amssymb,mathrsfs}
\usepackage{graphicx}
\usepackage{subcaption}
\usepackage{enumitem}

\usepackage{todonotes}
\usepackage{dsfont}

\usepackage{booktabs}
\usepackage{tabularx}
\usepackage{multirow}
\usepackage{a4wide}

\usepackage[ruled,vlined,linesnumbered]{algorithm2e}

\DeclareMathOperator{\opt}{OPT}

\DeclareMathOperator{\cost}{Cost}
\DeclareMathOperator{\mcc}{MCC}
\DeclareMathOperator{\rel}{rel}
\DeclareMathOperator{\dom}{reach}
\DeclareMathOperator{\sign}{sign}

\newcommand{\uug}[1]{\ensuremath{U_{#1}}}
\newcommand{\N}{\mathds{N}}
\newcommand{\Z}{\mathds{Z}}

\newcommand{\lin}{linearizable}
\newcommand{\Zlin}[1]{#1-$\Z$-\lin}

\newcommand{\rsg}{G_\ell^{\rel}} 
\newcommand{\rw}{w_\ell^{\rel}} 
\newcommand{\sdom}{\mathcal{S}_{G,\ell}^{\dom}} 



\SetKwInOut{Input}{Input}\SetKwInOut{Output}{Output}
\SetKwProg{Fn}{Function}{}{}
\SetKw{Continue}{continue}
\DontPrintSemicolon

\let\oldnl\nl  
\newcommand{\nonl}{\renewcommand{\nl}{\let\nl\oldnl}}  

\usepackage{tikz}
\usetikzlibrary{backgrounds}
\tikzset{vertex/.style = {shape=circle,draw,minimum size=3em}}
\tikzset{edge/.style = {->,> = latex'}}
\tikzset{> = stealth, shorten > = 1pt, auto, semithick}

\usepackage[numbers,sort,sectionbib]{natbib}

\usepackage[pagebackref,menucolor=orange!40!black,filecolor=magenta!40!black,urlcolor=blue!40!black,linkcolor=red!40!black,citecolor=green!40!black,colorlinks]{hyperref}
\usepackage[nameinlink,capitalize]{cleveref}

\usepackage{amsthm}
\newtheorem{theorem}{Theorem}[section]
\newtheorem{corollary}[theorem]{Corollary}
\newtheorem{lemma}[theorem]{Lemma}
\newtheorem{observation}[theorem]{Observation}
\newtheorem{proposition}[theorem]{Proposition}

\theoremstyle{definition}

\newtheorem{definition}[theorem]{Definition}

\newtheorem{reduction}[theorem]{Reduction}
\newtheorem{reductionrule}[theorem]{Reduction Rule}
\newtheorem{remark}[theorem]{Remark}
\newtheorem{claim}{Claim}

\crefname{reductionrule}{Reduction Rule}{Reduction Rules}
\crefname{reduction}{Reduction}{Reduction}
\crefname{observation}{Observation}{Observations}

\newcommand{\problemdef}[4]{
    \begin{center}
        \begin{minipage}{0.98\textwidth}
            \textsc{#2}
            
            \setlength{\tabcolsep}{2pt}
            \begin{tabularx}{\textwidth}{@{}lX@{}}
                \textbf{Input:}     & #3 \\
                \textbf{#1:}        & #4
            \end{tabularx}
        \end{minipage}
    \end{center}
}
\newcommand{\decProb}[4][Question]{\problemdef{#1}{#2}{#3}{#4}}
\newcommand{\optProb}[4][Task]{\problemdef{#1}{#2}{#3}{#4}}

\title{\Large \bf Parameterized Complexity of Min-Power Asymmetric Connectivity}
\author{Matthias Bentert \and Roman Haag \and Christian Hofer \and Tomohiro Koana \and Andr\'e Nichterlein}

\date{Algorithmics and Computational Complexity, Faculty IV, TU Berlin, Germany, \texttt{\{matthias.bentert,tomohiro.koana,andre.nichterlein\}@tu-berlin.de, \{roman.haag,hofer\}@campus.tu-berlin.de}}

\begin{document}

\pagestyle{plain}

\maketitle 

\begin{abstract}
We investigate parameterized algorithms for the NP-hard problem \textsc{Min-Power Asymmetric Connectivity (MinPAC)} that has applications in wireless sensor networks.
Given a directed arc-weighted graph, MinPAC asks for a strongly connected spanning subgraph minimizing the summed vertex costs.
Here, the cost of each vertex is the weight of its heaviest outgoing arc in the chosen subgraph. 
We present linear-time algorithms for the cases where the number of strongly connected components in a so-called obligatory subgraph or the feedback edge number in the underlying undirected graph is constant. 
Complementing these results, we prove that the problem is W[2]-hard with respect to the solution cost, even on restricted graphs with one feedback arc and binary arc weights.
\end{abstract}

\section{Introduction}

In wireless ad-hoc networks, nodes equipped with limited power supply transmit data using a multi-hop path. 
We study the problem of minimizing the overall power consumption while maintaining full network connectivity, that is, each node can send messages to each other node using some (multi-hop) route through the network.
Formally, we study the following optimization problem. 

\optProb{\textsc{Min-Power Asymmetric Connectivity (MinPAC)}}
{A strongly connected directed graph~$G$ and a weight (cost) function~$w\colon A(G) \to \N$.}
{Find a strongly connected spanning subgraph~$H$ of~$G$ minimizing
$$\sum_{v \in V} \max_{vu \in A(H)} w(vu).$$}
\vspace{-0.8cm}
\paragraph{Related work.} 
This problem was initially formalized and shown to be NP-complete by Chen and Huang \cite{chen1989strongly}. 
Since then, there have been numerous publications on polynomial-time approximation algorithms (an asymptotically optimal~$O(\log n)$ approximation \cite{Calinescu2003}, a constant approximation factor with symmetric arc weights \citep{chen1989strongly,Calinescu2013approximate}, approximation algorithms for special cases \cite{Carmi2007,chen2005dual,Calinescu2016}), and hardness results for special cases~\cite{clementi2004power,Carmi2007}. 
To the best of our knowledge, the parameterized complexity of \textsc{MinPAC} has not been investigated yet. 

In previous work, we investigated the parameterized complexity of the symmetric version of our problem~\cite{BBNN17}; the difference to MinPAC is that an undirected graph is given and for every undirected edge in the solution subgraph~$H$ both endpoints pay at least the weight of the edge. 
The asymmetric case turns out to be more involved on a technical level.
However, comparable results (to the symmetric case) are achievable.
\paragraph{Our contributions.}
We show algorithmic results for grid-like and tree-like input graphs as well as parameterized hardness for very restricted cases. 
\Cref{tab:results} summarizes our results. 
We discuss the different parameters subsequently.
\begin{table}[t]
    \caption{
        Overview of our results, using the following notation:  
        $n$---number of vertices, 
        $m$---number of arcs, 
        $c$---number of strongly connected components in the obligatory subgraph (see \Cref{sec:fpt}), 
        $q$---number of different arc weights, 
        $x$---size of a minimum vertex cover in the underlying undirected graph (see \Cref{sec:vcn}), 
        $g$---size of a minimum feedback edge set of the underlying undirected graph (see \Cref{sec:fen}),  
        $h$---size of a minimum feedback arc set (see \Cref{section:hardness}), 
        \textsc{PAC}---the decision version of \textsc{MinPAC} asking for a solution of cost at most~$k$.
    } 
    \label{tab:results}
    \setlength{\tabcolsep}{4pt}
    \begin{tabularx}{\textwidth}{ l X l }
        \toprule
        & result & reference \\ \midrule
        \multirow{2}{*}{\rotatebox{90}{S.~\ref{sec:fpt}}}   & Dynamic programming solving \textsc{MinPAC} in                & \multirow{2}{*}{\Cref{thm:SCC-Algo}} \\  
                                                & $O(c^2 \cdot 2^c \cdot n + m + 4^c \cdot c^{2c - 3/2})$ time.     & \\  
        \midrule
        \multirow{3}{*}{\rotatebox{90}{S.~\ref{sec:vcn}}} & An~$O(\min(\{ x, \log_q n \}) \cdot n + m)$-time data reduction resulting in an equivalent \textsc{MinPAC} instance with at most~$(q + 1) ^{2 x} + x$ vertices.& \multirow{2}{*}{\Cref{thm:VCKern}} \\
                                                & An exponential-size kernel with respect to~$x+q$.     & \cref{cor:expKernelVCq}\\ 
        \midrule
        \multirow{3}{*}{\rotatebox{90}{S.~\ref{sec:fen}}} & Linear-time data reduction resulting in an equivalent \textsc{MinPAC} instance with at most~$20g-20$ vertices and~$42g - 42$ arcs.
        & \multirow{2}{*}{\Cref{theorem:kernelization}} \\ 
                                                & A polynomial-size kernel with respect to~$g$.     & \cref{cor:poly-kernel-fes} \\  
        \midrule
        \multirow{4}{*}{\rotatebox{90}{\cref{section:hardness}}} & \textsc{PAC} is NP-hard for any~$h \geq 1$. & \multirow{4}{*}{\Cref{theo:hardness}} \\
        & \textsc{PAC} is W[2]-hard parameterized by~$k$, even if the arcs have only cost zero or one and~$h = 1$. & \\
                                                        & \textsc{PAC} is not solvable in~$2^{o(n)}$ time (assuming ETH).& \\
        \bottomrule
    \end{tabularx}
\end{table}

It is known that the alignment of nodes in some regular grid-like patterns is optimal to fully cover a plane. 
In such cases, we can assume that the \emph{obligatory arcs}, arcs that are in any optimal solution, induce a small number~$c$ of strongly connected components as there are many arcs of minimum weight. 
In \Cref{sec:fpt}, we define obligatory arcs, discuss the connection to grid-like graphs, and present an algorithm that solves \textsc{MinPAC} in linear time when~$c$ is a constant. 

In \cref{sec:vcn}, we study \textsc{MinPAC} parameterized by the number of different arc weights and the vertex cover number. 
For this combined parameter, we present an exponential-size kernel.

In \Cref{sec:fen}, we describe a linear-time algorithm which reduces any input instance to an equivalent instance with at most~$20g - 20$ vertices and~$42g - 42$ arcs, where~$g$~is the feedback edge number of the underlying undirected graph. 
The parameter is also motivated by real world applications in which the feedback edge number is small; for instance, sensor networks along waterways (including canals) are expected to have a small number of feedback edges.
It follows from our result hat the problem can be solved in polynomial time for~$g \in O(\log n)$, that is, for very tree-like input graphs. 
In terms of parameterized complexity, this gives us a partial (weights left unbounded) kernelization of \textsc{MinPAC} with respect to the feedback edge number.
Using an existing weight-shrinking technique~\cite{BBFNN}, we also provide a ``full'' polynomial-size kernel with respect to the feedback edge number.

Finally, in \Cref{section:hardness} we derive hardness results for \textsc{PAC}, the decision version of \textsc{MinPAC}. 
We show that even if the input graph has only binary weights and is almost a DAG (a directed acyclic graph with one additional arc), \textsc{PAC} parameterized by the solution cost is W[2]-hard. 
This is in sharp contrast to the FPT result for the parameter feedback edge number.

\paragraph{Preliminaries.} 
For~$a \in \N$, we abbreviate~$\{1, \dots, a\}$ by~$[a]$. 
Throughout this work, we assume that a graph is directed unless stated otherwise.  
For a graph~$G = (V, A)$, we write~$V(G)$ to denote~$V$ and~$A(G)$ to denote~$A$. 
We abbreviate arcs~$(u,v) \in A$ by~$uv$.
We denote by~$G[V']$ the subgraph induced by~$V'\subseteq V(G)$.
We use~$G + vu$ to denote~$(V(G) \cup \{ v, u \}, A(G) \cup \{ vu \})$ and~$G - vu$ to denote~$(V(G), A(G) \setminus \{ vu \})$.
For a vertex~$v \in V(G)$, we write~$N^+_G(v) = \{ u \mid vu \in A \}$ and~$N^-_G(v) = \{ u \mid uv \in A \}$
to denote the out- and in-neighborhood of~$v$. 
We define the degree of~$v$ as~$\deg_G(v) = |N^+_G(v) \mathbin{\cup} N^-_G(v)|$. 
We say that~$S \subseteq V(G)$ is a strongly connected component if there exists a path from each vertex~$u \in S$ to every other vertex~$v \in S$ in~$G[S]$. 
We write~$\mathcal{S}_G$ to denote the set of strongly connected components. 
We use~$\uug{G}$ to denote the underlying undirected graph of~$G$.
We denote the optimal cost of an instance of \textsc{MinPAC}~$I$ by~$\opt(I)$. 
The cost of a vertex subset~$V' \subseteq V(G)$ in a solution with arcs~$A' \subseteq A(G)$ is denoted by~$\cost(V', A', w) = \sum_{v \in V'} \max_{vu \in A'} w(vu)$. 
For ease of notation, we write~$w(vu) = \infty$ if~$vu \not\in A$.

A parameterized problem~$\Pi$ is a set of pairs~$(I, k)$, where~$I$ denotes the problem instance and~$k$ is the parameter. 
The problem~$\Pi$ is \emph{fixed-parameter tractable} (FPT) if there exists an algorithm solving any instance of~$\Pi$ in~$f(k) \cdot |I|^c$ time, where~$f$ is some computable function and~$c$ is some constant. 
A \emph{reduction to a problem kernel} is a polynomial-time algorithm that, given an instance~$(I, k)$ of~$\Pi$, returns an equivalent instance~$(I', k')$, such that~$|I'| + k' \le g(k)$ for some computable function~$g$.
Problem kernels are usually achieved by applying \emph{data reduction rules}.
Given an instance~$(I, k)$ for \textsc{MinPAC}, our data reduction rules compute in polynomial time a new instance~$(I', k')$ of \textsc{MinPAC} and a number~$d$.
We call a data reduction rule \emph{correct}, if~$\opt(I) = \opt(I') + d$.

\section{Parameterization by the number of strongly connected components induced by the obligatory arcs}
\label{sec:fpt}
In this section we present a fixed-parameter algorithm with respect to the number~$c$ of strongly connected components (SCCs) induced by \emph{obligatory arcs}---arcs that can be included into any optimal solution with no additional cost.
We find the obligatory arcs by means of lower bounds on costs paid by each vertex.

\begin{definition}
  A \emph{vertex lower bound} is a function~$\ell \colon V(G) \to \N$ such that for any optimal solution~$H$ and any vertex~$v \in V(G)$, it holds that
  \[
    \ell(v) \le \max_{vu \in A(H)} w(vu).
  \]
\end{definition}

Observe that each vertex~$v \in V(G)$ has at least one outgoing arc in any optimal solution.
Hence, the cost paid by~$v$ in any optimal solution is at least~$\min_{vu \in A(G)} w(vu)$.
Thus, $\ell(v) \ge \min_{vu \in A(G)} w(vu)$.
Moreover, if a vertex~$v$ has only one incoming arc~$uv$, then the cost for the vertex~$u$ is at least~$w(uv)$, and thus~$\ell(u) \ge w(uv)$.
Clearly, finding more effective but still efficiently computable vertex lower bounds is challenging on its own.

\begin{definition} \label{obligatoryGraph} 
    The \emph{obligatory subgraph}~$G_\ell$ induced by a vertex lower bound~$\ell$ for~$G$ is a subgraph~$(V(G), A_\ell)$, where~$A_\ell = \{vu \mid w(vu) \le \ell(v) \}.$
\end{definition}

It has been shown that sensors are optimally placed  for fully covering an area when sensors are deployed in a triangular lattice pattern \cite{Zhang2005} or a strip based-pattern \cite{Bai2006,IKB09}. 
In such cases, there are many arcs of minimum weight.
Taking these arcs usually suffices to (almost) achieve strong connectivity.
So even the obligatory subgraph induced by the trivial vertex lower bound described above yields a small number of SCCs.

Let~$\ell$ be a vertex lower bound for a graph~$G$. 
We denote the number of SCCs of the obligatory subgraph~$G_\ell$ by~$c = |\mathcal{S}_{G_\ell}|$.
The number~$c$ of (strongly) connected components in the obligatory subgraph has recently been used as parameter to obtain FPT results \cite{BBNN17,sorge2012new}.
In this section, we also provide an FPT result with respect to this parameter.
More specifically, we will present an algorithm for \textsc{MinPAC} that runs in~$O(2^c\cdot c^2 \cdot n + m + 4^c \cdot c^{2c - 3/2})$ time.
Our algorithm runs in three phases.
In the first phase, it shrinks the graph to a \emph{relevant} subgraph in which each vertex~$v$ has at most one arc towards each SCC that does not contain~$v$ (\Cref{scc:redalgo}).
In the second phase, it uses a dynamic programming algorithm to compute the minimum cost to connect each SCC to each subset of other SCCs (\Cref{scc:dp}).
In the last phase, it exhaustively tries all combinations of connecting SCCs to find an optimal solution (\Cref{alg:minpac}). 

\paragraph{Phase 1.} 
The following lemma specifies the conditions under which we can remove arcs. 
It plays a central role in this phase.
The basic idea herein is to remove, for each vertex~$v \in V(G)$ and each SCC~$S$, all but the cheapest arc from~$v$ to vertices in~$S$.

\begin{lemma}
\label{scc:arcred}
Let~$(G, w)$ be an instance of \textsc{MinPAC} and let~$\ell$ be a vertex lower bound.
Let~$S_v, S_u \in \mathcal{S}_{G_\ell}$ be two distinct SCCs and let~$v \in S_v$ and~$u, u' \in S_u$ be vertices of~$G$ with~$w(vu) \le w(vu')$. 
Then, it holds that~$\opt((G, w)) = \opt((G - vu', w))$.
\end{lemma}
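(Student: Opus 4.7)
The plan is to show the two directions of $\opt((G,w)) = \opt((G-vu',w))$ separately. The inequality $\opt((G-vu',w)) \ge \opt((G,w))$ is immediate, since any strongly connected spanning subgraph of $G-vu'$ is also one of $G$. So the work lies in proving $\opt((G-vu',w)) \le \opt((G,w))$ by transforming an arbitrary optimal solution $H$ of $(G,w)$ into a solution $H'$ of $(G-vu',w)$ with $\cost(V(G), A(H'), w) \le \cost(V(G), A(H), w)$.

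I would first dispense with the trivial case $vu' \notin A(H)$, where $H$ itself is a solution for $G-vu'$. In the nontrivial case $vu' \in A(H)$, the key observation is that $u$ and $u'$ lie in the same SCC $S_u$ of the obligatory subgraph $G_\ell$, so there is a directed $u$-to-$u'$ path $P$ in $G_\ell$. I then define
\[
H' \;=\; (H - vu') \,+\, vu \,+\, P,
\]
where adding $P$ means adding all of its arcs (those already present in $H$ change nothing). Because $v \in S_v \ne S_u$, the vertex $v$ does not appear on $P$, which keeps the bookkeeping clean.

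The next step is to verify that $H'$ is still a strongly connected spanning subgraph of $G-vu'$. Since $H$ was strongly connected, it suffices to reroute any walk that used the arc $vu'$: every such walk can be redirected by traversing $vu$ followed by $P$ to arrive at $u'$, after which the remainder of the walk in $H$ is available in $H'$. Spanning is inherited from $H$.

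The cost argument is the main technical point and should be done vertex by vertex. For $v$ itself, removing $vu'$ and inserting $vu$ gives
\[
\max_{vr \in A(H')} w(vr) \;\le\; \max\bigl(\max_{vr \in A(H),\, r \ne u'} w(vr),\, w(vu)\bigr) \;\le\; \max_{vr \in A(H)} w(vr),
\]
using $w(vu) \le w(vu')$ and the fact that $vu' \in A(H)$ forces $\max_{vr \in A(H)} w(vr) \ge w(vu')$. For a vertex $p$ on $P$ different from $u'$, the only possibly new outgoing arc $pq$ lies in $G_\ell$, so $w(pq) \le \ell(p)$; since $\ell$ is a vertex lower bound and $H$ is optimal, $\ell(p) \le \max_{pr \in A(H)} w(pr)$, and therefore adding $pq$ does not raise $p$'s cost. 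All remaining vertices have unchanged out-neighborhoods. Summing over all vertices yields $\cost(V(G), A(H'), w) \le \cost(V(G), A(H), w) = \opt((G,w))$, which finishes the second inequality.

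The only real obstacle I anticipate is making the cost bookkeeping for $v$ airtight: one must be careful not to double-count or to accidentally appeal to the lower bound $\ell(v)$ (which need not relate $w(vu)$ and $w(vu')$), and instead base the inequality purely on $w(vu) \le w(vu')$ together with $vu' \in A(H)$. The rerouting and the bound for vertices on $P$ are then routine applications of the definitions of SCC and of vertex lower bound.
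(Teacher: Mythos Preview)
Your proof is correct and follows the same strategy as the paper's: replace $vu'$ by $vu$ in an optimal solution and argue that the result is still a strongly connected spanning subgraph of no larger cost. You are in fact more careful than the paper, which simply sets $H' = H + vu - vu'$ and asserts strong connectivity from $u,u' \in S_u$ without explicitly adding the $u$-to-$u'$ path in $G_\ell$ (or, equivalently, assuming $A(G_\ell) \subseteq A(H)$); your explicit inclusion of $P$ together with the vertex-lower-bound argument $w(pq) \le \ell(p) \le \max_{pr \in A(H)} w(pr)$ fills precisely that gap.
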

{
\begin{proof}
  Observe that the removal of an arc does not decrease the cost of an optimal solution.
  Thus, $\opt((G, w)) \le \opt((G - vu', w))$.
  Let~$H$ be an optimal solution of~$(G, w)$.
  Suppose that~$H$ contained~$vu'$.
  The cost paid by~$v$ in~$H$ is then at least~$w(vu')$.
  Since~$u$ and~$u'$ both belong to~$S_u$, the subgraph~$H' = H + vu - vu'$ is a strongly connected spanning subgraph.
  Since~$w(vu) \le w(vu')$, it follows that~$H'$ is optimal. 
  Moreover,~$H'$~is also a solution of~$(G - vu', w)$, and thus~$\opt((G, w)) \ge \opt((G - vu', w))$.
\end{proof}
}
\Cref{scc:redalgo} exhaustively removes all arcs~$vu'$ which satisfy the preconditions of \Cref{scc:arcred}:
The algorithm iterates over each arc in~$G$ twice.
It finds a minimum-weight arc from each vertex to each SCC in the first iteration.
In the second iteration, it removes all but one minimum-weight arc that share the initial vertex and the SCC the terminal vertex belongs to.

We show subsequently that the resulting instance of \textsc{MinPAC} satisfies the properties listed in the next definition.

\begin{definition}
  \label{def:rel}
  Let~$(G, w)$ be an instance of \textsc{MinPAC} and let~$\ell$ be a lower bound.
  We say that a graph~$\rsg$ and a weight function~$\rw$ with~$V(G) = V(\rsg)$, $A(\rsg) \subseteq A(G)$, and~$\rw \colon A(\rsg) \to \N$ are \emph{a relevant subgraph} and \emph{relevant weight function} induced by~$\ell$, respectively, if they satisfy the following properties:
  \begin{enumerate}[align=left,label=(\roman*)]
    \setlength{\itemindent}{.5em}
    \setlength{\labelwidth}{1.5em}
    \item $\opt((G, w)) = \opt((\rsg, \rw))$.
    \item For any SCC~$S \in \mathcal{S}_{G_\ell}$, it holds that~$G[S] = \rsg[S]$.\label{def:rel2}
    \item For any SCC~$S \in \mathcal{S}_{G_\ell}$ and any vertex~$v \notin S$, it holds that~$|\{ vu \in A(\rsg) \mid u \in S \}| \le 1$. 
  \end{enumerate}
\end{definition}

Since it follows from property \ref{def:rel2} that~$\mathcal{S}_{G_\ell} = \mathcal{S}_{\rsg}$, we will use them interchangeably.

\begin{algorithm}[t]
  \caption{A reduction procedure for the first phase}
  \label{scc:redalgo}
  \SetKwFunction{FRed}{Reduction}
  \SetKwProg{Fn}{Function}{}{}
  \SetKw{null}{null}

  \Fn{\FRed{$G, w, \ell$}}{
    \tcp{$S_u \in \mathcal{S}_{G_\ell}$ denotes the SCC to which~$u$ belongs}
    \lForEach{$vu \in A(G)$}{$M(v, S_u) \leftarrow \null$} \label{alg:one:init}
    \ForEach{$vu \in A(G)$}{ \label{alg:one:iter1}
        \lIf{$M(v, S_u) = $ \null or~$w(vu) < w(M(v, S_u))$}{$M(v, S_u) \leftarrow vu$}
    }
    \ForEach{$vu \in A(G)$}{ 
      \lIf{$S_v = S_u$}{\Continue}
      \lIf{$vu \ne M(v, S_u)$}{remove~$vu$ from~$G$}
    } \label{alg:one:iter2}
    \Return $(G, w)$
  }
\end{algorithm}

\begin{lemma}
  \label{scc:ared}
  Let~$(G, w)$ be an instance of \textsc{MinPAC} and let~$\ell$ be a vertex lower bound.
  \Cref{scc:redalgo} computes in~$O(m)$ time a relevant subgraph~$\rsg$ and a relevant weight function~$\rw$ induced by~$\ell$.
\end{lemma}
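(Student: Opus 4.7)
My plan is to verify that \Cref{scc:redalgo} satisfies each of the three conditions of \Cref{def:rel} and then to argue the running time. A useful preliminary observation is that the algorithm implicitly requires knowledge of the SCCs of~$G_\ell$; these are computed once in~$O(n+m)$ time via a standard SCC algorithm, after which each vertex~$u$ can be labeled with its component~$S_u$. I would treat~$M(v,S)$ as a hash-table or as an array indexed by the pair (vertex, SCC-identifier) so that each lookup and update takes~$O(1)$ amortized time.

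For property~\ref{def:rel2} and its counterpart~(iii), the argument is essentially syntactic. In the second loop the algorithm never removes an arc~$vu$ with~$S_v = S_u$, so every intra-SCC arc survives and hence~$G[S] = \rsg[S]$ for all~$S \in \mathcal{S}_{G_\ell}$; this gives~(ii). For~(iii), fix a vertex~$v$ and an SCC~$S \ne S_v$. After the first loop, $M(v,S)$ holds (if it exists) the unique arc from~$v$ into~$S$ of minimum weight chosen by the algorithm. In the second loop, every other arc from~$v$ into~$S$ is removed, so at most the single arc~$M(v,S)$ remains, establishing~(iii).

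The main technical step is property~(i), which I would prove by induction on the sequence of arcs removed by the second loop. The invariant is that at every point during the loop both the current graph~$G'$ and the original graph~$G$ satisfy~$\opt((G,w)) = \opt((G',w))$, and moreover every arc~$M(v,S)$ is still present in~$G'$. The latter is immediate because the second loop only removes arcs~$vu$ with~$vu \ne M(v,S_u)$, so it never deletes any stored minimum. Now when the loop deletes some~$vu'$ with~$S_v \ne S_{u'}$, let~$vu = M(v,S_{u'})$; by the choice of~$M$ in the first loop we have~$w(vu) \le w(vu')$, and~$vu$ is present in~$G'$ by the invariant. \Cref{scc:arcred} then applies to~$G'$ and guarantees~$\opt((G',w)) = \opt((G' - vu',w))$, preserving the invariant. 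After all iterations we obtain~$\rsg$ with~$\opt((G,w)) = \opt((\rsg,\rw))$, where~$\rw$ is the restriction of~$w$. The subtlety I want to be careful about is precisely this preservation of~$M(v,S_{u'})$ in the graph when invoking \Cref{scc:arcred}, since the lemma is stated for single arc removals; the invariant above handles this cleanly.

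For the running time, computing the SCCs of~$G_\ell$ takes~$O(n+m)$ time, the initialization in \Cref{alg:one:init} visits each arc once, and each of the two main loops performs~$O(1)$ work per arc. With appropriate data structures (for instance, arrays of size~$n$ indexed by SCC identifiers, reset lazily), all dictionary operations on~$M$ take constant time, giving an overall running time of~$O(n+m) = O(m)$ (the bound~$O(m)$ uses that~$G$ is strongly connected, hence~$n \le m$).
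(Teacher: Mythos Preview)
Your proposal is correct and follows essentially the same approach as the paper's proof: verify properties~(ii) and~(iii) syntactically from the algorithm's structure, invoke \Cref{scc:arcred} for each removed arc to obtain property~(i), and bound the running time by observing that each loop does~$O(1)$ work per arc. Your treatment is in fact more careful than the paper's on two points the paper glosses over: you make the inductive invariant (that the kept arc~$M(v,S_{u'})$ is still present when \Cref{scc:arcred} is applied) explicit, and you account for the~$O(n+m)$ preprocessing to compute the SCCs of~$G_\ell$, justifying~$O(n+m)=O(m)$ via strong connectivity.
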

{
\begin{proof}
  For any~$v \in V(G)$ and any~$S \in \mathcal{S}_{G_\ell}$, \Cref{scc:redalgo} sets~$M(v, S)$ to a minimum-weight arc from~$v$ to a vertex in~$S$ in the first iteration.
  In the second iteration, every arc~$vu$ is removed unless~$v$ and~$u$ belong to the same SCC or~$M(v, S_u)$ is set to~$vu$.
  Recall that the optimal cost remains the same after arc removals as shown in \Cref{scc:arcred}.
  Thus, \Cref{scc:redalgo} returns an instance of \textsc{MinPAC}~$(\rsg, \rw)$ that satisfies the aforementioned properties. 
  The algorithm spends~$O(m)$ time on initializing~$M$ (\Cref{alg:one:init}) and~$O(m)$ time on the iteration over the arcs (\Crefrange{alg:one:iter1}{alg:one:iter2}).
\end{proof}
}
\paragraph{Phase 2.} 
In this phase, we aim to compute an optimal set of arcs to connect each SCC to all other SCCs. 
We start with some notation.

\begin{definition}
  Let~$\rsg$ be a relevant subgraph.
  For any~$S \in \mathcal{S}_{G_\ell}$, we define the set of SCCs reachable from~$S$ via an arc as 
  \[
    \sdom(S) = \{ S' \in \mathcal{S}_{G_\ell} \setminus \{ S \} \mid \exists vu \in A(\rsg).\; v  \in S \wedge u \in S' \}.
  \]
\end{definition}

We say that an arc set~$B$ is a \emph{connector} if~$B$ connects some SCC~$S$ to some set~$\mathcal{T} \subseteq \sdom(S)$ of SCCs reachable from~$S$.
Then, our goal is to find a connector of minimum cost for each~$S \in \mathcal{S}_{G_\ell}$ and each subset~$\mathcal{T} \subseteq \sdom(S)$.
This allows us to compute an optimal solution with exhaustive search on connections between SCCs in the last phase.

\begin{definition} \label{def:mcc}
  Let~$(G, w)$ be an instance of \textsc{MinPAC} and let~$\ell$ be a vertex lower bound.
  A minimum-cost connector is a function~$\mcc \colon \mathcal{S}_{G_\ell} \times 2^{\mathcal{S}_{G_\ell}} \to 2^{A(\rsg)}$ such that for any~$S \in \mathcal{S}_{G_\ell}$ and any~$\mathcal{T} \subseteq \sdom(S)$ the following properties are satisfied:
  \begin{enumerate}%
    \item For any~$S' \in \mathcal{T}$, there exist vertices~$v \in S$ and~$u \in S'$ with~$vu \in \mcc(S, \mathcal{T})$.\label{def:mcc1}
    \item There is no subset~$B \subseteq A(\rsg)$ that satisfies the above property and that satisfies~$\cost(S, B, \rw) < \cost(S, \mcc(S, \mathcal{T}), \rw)$.
  \end{enumerate}
\end{definition}

\Cref{scc:dp} computes a minimum-cost connector.
For each SCC~$S \in \mathcal{S}_{G_\ell}$, we employ dynamic programming over vertices in~$S$ and subsets of~$\mathcal{S}_{G_\ell}$.
This gives us a significant speed-up compared to the na\"ive approach of branching into at worst~$c$ different neighbors on each vertex: 
from~$n^{\theta(c)}$~time to~$O(2^c \cdot c^2 \cdot n)$~time.

\begin{algorithm}[t]
\caption{A dynamic-programming procedure for the second phase}
\label{scc:dp}
\SetKwFunction{FDP}{DP}
\SetKwProg{Fn}{Function}{}{}
\SetKw{null}{null}

\Fn{\FDP{$\rsg, \rw$}}{
  \tcp{$S_v \in \mathcal{S}_{G_\ell}$ denotes the SCC to which~$v \in V(G_\ell)$ belongs}
  \tcp{$\mathcal{T}_B \subseteq \mathcal{S}_{G_\ell}$ denotes~$\{ S_u \mid \exists v.\; vu \in B \}$ for any~$B \subseteq A(G)$}
  \ForEach{$S = \{ v_1, \dots, v_{n_S} \} \in \mathcal{S}_{G_\ell}$}{
    \tcp{Initialization phase}
    $B_0 \leftarrow \emptyset$, $D_0(\emptyset) \leftarrow \emptyset$ \label{line:initstart}\;
    \ForEach{$i \in [n_S]$}{ \label{initstart}
      $B_i \leftarrow B_{i - 1} \cup \{ v_i u \in A(\rsg) \mid u \not\in S, S_u \not\in \mathcal{T}_{B_{i - 1}} \}$\label{biupdate}\;
      \lForEach{$\mathcal{T} \subseteq \mathcal{T}_{B_i}$}{$D_i(\mathcal{T}) \leftarrow \{ vu' \in B_i \mid S_{u'} \in \mathcal{T} \}$\label{initassign}}
    } \label{initend}
    \tcp{Update phase}
    \ForEach{$i \in \left[n_S\right]$}{ \label{veriter}
      \ForEach{$u \in \{ u' \mid u' \not\in S \wedge v_i u' \in A(\rsg) \}$}{
        $B_{i, u} \leftarrow \{ v_i u' \in A(\rsg) \mid u' \not\in S, \rw(v_i u') \le \rw(v_i u) \}$\label{niter}\;
        \ForEach{$\mathcal{T} \subseteq \sdom(S)$}{ \label{ssiter}
          \lIf{$\mathcal{T} \not\subseteq \mathcal{T}_{B_{i-1}} \cup \mathcal{T}_{B_{i, u}}$}{\Continue}
          \lIf{$\mathcal{T} \subseteq \mathcal{T}_{B_{i-1}}$}{$D_i(\mathcal{T}) \leftarrow D_{i - 1}(\mathcal{T})$\label{scc:dassign1}}
          \If{$\cost(S, D_{i-1}(\mathcal{T} \setminus \mathcal{T}_{B_{i, u}}), \rw) + \rw(v_i u) < \cost(S, D_{i}(\mathcal{T}), \rw)$}{
            $D_i(\mathcal{T}) \leftarrow D_{i-1}(\mathcal{T} \setminus \mathcal{T}_{B_{i, u}}) \cup \{ v_i u' \in B_{i, u} \mid S_{u'} \in \mathcal{T} \}$\label{scc:dassign2}
          }\label{niterend}
        }
      }
    }
    $\mcc(S, \, \cdot \,) \leftarrow D_{n_S}$ \;
  }
  \Return $\mcc$
}
\end{algorithm}

\begin{lemma}
  \label{lemma:dp}
  Given a relevant subgraph~$\rsg$ and a relevant weight function~$\rw$, \Cref{scc:dp} computes a minimum-cost connector~$\mcc$ in~$O(2^c \cdot c^2 \cdot n)$~time.
\end{lemma}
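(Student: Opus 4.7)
The plan is to prove correctness by induction on the outer vertex index within each SCC and then analyze the running time by counting operations across the nested loops. Fix an SCC $S \in \mathcal{S}_{G_\ell}$ with vertices $v_1, \ldots, v_{n_S}$. I would establish the following invariant: after the update phase finishes index $i$, for every $\mathcal{T} \subseteq \mathcal{T}_{B_i}$ the set $D_i(\mathcal{T})$ is a valid connector (containing, for each $S' \in \mathcal{T}$, an arc from some $v_j$ with $j \leq i$ to $S'$) and $\cost(S, D_i(\mathcal{T}), \rw)$ equals the minimum cost of any such connector built only from outgoing arcs of $\{v_1, \ldots, v_i\}$. The initialization phase makes $D_i(\mathcal{T})$ a valid (possibly suboptimal) connector whenever $\mathcal{T} \subseteq \mathcal{T}_{B_i}$, so the comparisons performed during the update phase are well-defined.

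The base case $i = 0$ reduces to $D_0(\emptyset) = \emptyset$. For the inductive step, let $H^{\star}$ be an optimal connector for $\mathcal{T}$ using only arcs out of $\{v_1, \ldots, v_i\}$. If $H^{\star}$ uses no arc out of $v_i$, then $\mathcal{T} \subseteq \mathcal{T}_{B_{i-1}}$ and \cref{scc:dassign1} copies the previous layer's optimum. Otherwise, let $v_i u^{\star}$ be the heaviest arc out of $v_i$ in $H^{\star}$. An optimal $H^{\star}$ routes every arc from $v_i$ into an SCC of $\mathcal{T}$, so $S_{u^{\star}} \in \mathcal{T}$ and $v_i$ contributes exactly $\rw(v_i u^{\star})$. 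When the inner loop reaches $u = u^{\star}$, the set $\{v_i u' \in B_{i, u^{\star}} \mid S_{u'} \in \mathcal{T}\}$ covers $\mathcal{T} \cap \mathcal{T}_{B_{i, u^{\star}}}$ at $v_i$-cost exactly $\rw(v_i u^{\star})$, while the residual $\mathcal{T} \setminus \mathcal{T}_{B_{i, u^{\star}}}$ is handed off to $D_{i-1}$. By the induction hypothesis, $\cost(S, D_{i-1}(\mathcal{T} \setminus \mathcal{T}_{B_{i, u^{\star}}}), \rw)$ does not exceed the cost of $H^{\star}$ with $v_i$'s arcs removed (the latter still covers a superset of the required SCCs using the same vertex set), so \cref{scc:dassign2} attains the true optimum; the outer minimum over candidate $u$ prevents overshooting in the other iterations.

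For the running time, fix $S$. Property~(iii) of \cref{def:rel} bounds the number of outgoing arcs of $v_i$ leaving $S$ by $c - 1$, so the iteration over $u$ contributes a factor of $O(c)$ and the iteration over $\mathcal{T} \subseteq \sdom(S)$ a factor of $2^c$. Representing subsets of $\sdom(S)$ as bitmasks of length $c$ and storing the cost of each $D_i(\mathcal{T})$ alongside the arc set, every innermost iteration runs in $O(c)$ time (bitmask difference, table lookup, and construction of the new arc set of size at most $c$). This gives $O(n_S \cdot c^2 \cdot 2^c)$ for the update phase per SCC; the initialization phase adds only $O(n_S \cdot c \cdot 2^c)$. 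Summing over all SCCs of $\mathcal{S}_{G_\ell}$ and using $\sum_S n_S = n$ yields the claimed $O(2^c \cdot c^2 \cdot n)$ bound.

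The main obstacle I anticipate is case~(ii) of the induction: the set $\mathcal{T} \setminus \mathcal{T}_{B_{i, u^{\star}}}$ handed to $D_{i-1}$ can be strictly smaller than the residual that $H^{\star}$ itself leaves to $\{v_1, \ldots, v_{i-1}\}$, because $\mathcal{T}_{B_{i, u^{\star}}}$ may contain SCCs that $H^{\star}$ does not actually reach from $v_i$. Arguing that this substitution does not hide a cheaper solution relies on the monotonicity of the subproblem optimum under shrinking demand, which must be spelled out explicitly.
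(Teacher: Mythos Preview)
Your proposal is correct and follows essentially the same approach as the paper: induction on the vertex index~$i$ within each SCC, with a case split on whether an optimal connector uses an arc out of~$v_i$, plus the same running-time bookkeeping via property~(iii) of \cref{def:rel}. The paper separates feasibility (the ``exactly one arc per target SCC'' claim) from optimality, whereas you fold both into a single invariant, and the monotonicity issue you flag as an obstacle is indeed the only subtle step---the paper glosses over it in the line ``the second inequality follows from the assignment on \Cref{scc:dassign2}'', so your instinct to make it explicit is well placed.
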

{
\begin{proof}
  We fix some~$S = \{ v_1, \dots, v_{n_S} \}$.
  We use~$S_v$ to denote the SCC to which the vertex~$v \in V(\rsg)$ belongs and~$\mathcal{T}_B = \{ S_u \mid \exists w.\; wu \in B \}$ to denote the SCCs containing a terminal vertex of at least one arc in~$B \subseteq A(\rsg)$.
  Note that in the subsequent proof the arcs in~$B$ will all have their initial vertex in~$S$.
  Moreover, the arc set~$B_i$ contains only arcs having a initial vertex in~$\{v_1, \ldots, v_i\} \subseteq S$ and we have~$\mathcal{T}_{B_i} = \{ S_u \mid \exists j \le i.\; v_ju \in A(\rsg)\}$.
  (See \Cref{biupdate} of \cref{scc:dp} for the computation of the arc set~$B_i$.)

  In order to show that the output of \Cref{scc:dp} satisfies the property of \Cref{def:mcc}~\ref{def:mcc1}, we prove the following stronger claim.
  \begin{claim}
    For any~$i \in \{ 0, \ldots, n_S \}$ and any set~$\mathcal{T} \subseteq \mathcal{T}_{B_i}$ of SCC, the arc set~$D_i(\mathcal{T})$ contains exactly one arc that starts in $\{v_1,\ldots,v_i\}$ and ends inside~$S'$ for each~$S' \in \mathcal{T}$ and contains no other arc.
  \end{claim}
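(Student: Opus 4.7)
\begin{proofofclaim}
The plan is to prove the claim by induction on~$i$, strengthened with an auxiliary invariant on the set~$B_i$ itself: for every~$i \in \{0, \ldots, n_S\}$, the arc set~$B_i$ contains exactly one arc into each~$S' \in \mathcal{T}_{B_i}$, and every arc in~$B_i$ starts in~$\{v_1, \ldots, v_i\}$ and ends outside~$S$. This auxiliary invariant is what makes the initialization assignment on \Cref{initassign} safe.

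The base case~$i = 0$ is immediate since~$B_0 = \emptyset$, $\mathcal{T}_{B_0} = \emptyset$, and~$D_0(\emptyset) = \emptyset$. The auxiliary invariant passes from~$B_{i-1}$ to~$B_i$ because \Cref{biupdate} only adds arcs from~$v_i$ to SCCs outside~$\mathcal{T}_{B_{i-1}}$, and by the relevance of~$\rsg$ (\Cref{def:rel}(iii)) the vertex~$v_i$ has at most one outgoing arc to any given SCC. Consequently, for each~$S' \in \mathcal{T}_{B_i} \setminus \mathcal{T}_{B_{i-1}}$ exactly one arc~$v_i u' \in B_i$ has~$S_{u'} = S'$, while the unique representative already in~$B_{i-1}$ is preserved for each~$S' \in \mathcal{T}_{B_{i-1}}$.

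For the inductive step on the~$D_i$ values, I would trace every write to~$D_i(\mathcal{T})$ during iteration~$i$. The initialization write (\Cref{initassign}) assigns~$\{vu' \in B_i \mid S_{u'} \in \mathcal{T}\}$, which by the auxiliary invariant is a set containing exactly one arc per~$S' \in \mathcal{T} \subseteq \mathcal{T}_{B_i}$, with all starting vertices in~$\{v_1, \ldots, v_i\}$. The copy write on \Cref{scc:dassign1} fires only when~$\mathcal{T} \subseteq \mathcal{T}_{B_{i-1}}$, and then the inductive hypothesis applied to~$D_{i-1}(\mathcal{T})$ yields exactly the desired property, since~$\{v_1, \ldots, v_{i-1}\} \subseteq \{v_1, \ldots, v_i\}$.

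The only non-trivial case is the merging write on \Cref{scc:dassign2}: the guard~$\mathcal{T} \subseteq \mathcal{T}_{B_{i-1}} \cup \mathcal{T}_{B_{i, u}}$ forces~$\mathcal{T} \setminus \mathcal{T}_{B_{i, u}} \subseteq \mathcal{T}_{B_{i-1}}$, so~$D_{i-1}(\mathcal{T} \setminus \mathcal{T}_{B_{i, u}})$ is defined and by induction contributes exactly one arc per SCC in~$\mathcal{T} \setminus \mathcal{T}_{B_{i, u}}$, each starting in~$\{v_1, \ldots, v_{i-1}\}$. The appended set~$\{v_i u' \in B_{i, u} \mid S_{u'} \in \mathcal{T}\}$ contributes, again by relevance of~$\rsg$ applied to~$v_i$, exactly one arc per~$S' \in \mathcal{T} \cap \mathcal{T}_{B_{i, u}}$, all starting in~$v_i$. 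The two contributions have disjoint starting vertices and cover disjoint subsets of~$\mathcal{T}$ whose union is~$\mathcal{T}$, so their union has exactly one arc per~$S' \in \mathcal{T}$ starting in~$\{v_1, \ldots, v_i\}$. The main obstacle is book-keeping: one must verify that the two guard checks rule out every case in which~$D_{i-1}$ is consulted on an undefined argument or some SCC receives two arcs, after which the one-arc-per-SCC property of~$B_i$ and~$B_{i, u}$ (both inherited from the relevance of~$\rsg$) closes the argument.
\end{proofofclaim}
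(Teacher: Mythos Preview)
Your proposal is correct and follows essentially the same approach as the paper: induction on~$i$, establishing the one-arc-per-SCC property of~$B_i$ via the relevance condition (\Cref{def:rel}(iii)), and then checking each of the three assignments to~$D_i(\mathcal{T})$ (\Cref{initassign}, \Cref{scc:dassign1}, \Cref{scc:dassign2}) separately. The paper organizes the argument into two passes (initialization phase, then update phase) rather than tracing all writes in a single inductive step, but the substance is identical; if anything, your explicit observation that the guard~$\mathcal{T} \subseteq \mathcal{T}_{B_{i-1}} \cup \mathcal{T}_{B_{i,u}}$ is what guarantees that~$D_{i-1}(\mathcal{T} \setminus \mathcal{T}_{B_{i,u}})$ is defined is a detail the paper leaves implicit.
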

  \begin{proof}[Proof of claim]
  We first show by induction that the claim holds after the initialization phase (\Crefrange{line:initstart}{initend}). 
  It holds for the base case~$i = 0$ because we have~$D_0(\emptyset) = \emptyset$.
  When~$i \ge 1$, since~$\rsg$ is a relevant subgraph (see \cref{def:rel}), it follows that the set~$\{ v_i u \in A(\rsg) \mid u \not\in S, S_u \not\in \mathcal{T}_{B_{i - 1}} \}$ on \Cref{biupdate} contains exactly one arc whose terminal vertex lies in~$S'$ for any~$S' \in \mathcal{T}_{B_i} \setminus \mathcal{T}_{B_{i-1}}$. %
  Since the claim holds for~$i - 1$ by induction hypothesis, $B_i$ contains exactly one arc that ends inside~$S'$ for each~$S' \in \mathcal{T}_{B_i}$ and no other arc.
  In \Cref{initassign}, the algorithm computes from this set~$B_i$ of arcs all arcs that end in~$\mathcal{T}$ and assigns this to~$D_i(\mathcal{T})$.
  Hence~$D_i(\mathcal{T})$ contains exactly one arc for each~$S' \in \mathcal{T}$ and thus the the claim holds for~$i$ as well.

  We now show---again by induction---that the claim holds after the update phase.
  Again the claim clearly holds for the base case~$i = 0$.
  When~$i \ge 1$, note that each iteration step (\Crefrange{niter}{niterend}) corresponds to the case in which the cost for~$v_i$ is exactly~$w(v_i u)$. 
  The algorithm finds the set~$B_{i, u}$ of arcs that~$v_i$ can cover with cost~$w(v_i u)$.
  We verify that the claimed property is maintained after the assignment on \Cref{scc:dassign1} and \Cref{scc:dassign2}.
  The assignment on \Cref{scc:dassign1} is clearly correct by induction hypothesis.
  Observe that~$B_{i, u}$ contains exactly one arc whose terminal vertex lies in~$S'$ for any~$S' \in \mathcal{T}_{B_{i, u}}$ since~$\rsg$ is a relevant subgraph.
  We can assume from the induction hypothesis that~$D_{i-1}(\mathcal{T} \setminus \mathcal{T}_{B_{i, u}})$ contains exactly one arc that ends inside~$S''$ for each~$S'' \in \mathcal{T} \setminus \mathcal{T}_{B_{i, u}}$ and no further arc.
  Thus, the assignment on \Cref{scc:dassign2} ensures that the claim is correct.
  \end{proof}
  
  For the second property, we prove by induction over~$i  \in \{ 0, \ldots, n_S \}$ that for any~$\mathcal{T} \in \sdom(S)$, the cost of~$S$ associated with~$D_i(\mathcal{T})$ is minimized when~$S$ is restricted to the vertices~$\{ v_1, \dots, v_i \}$.
  It holds in the base case~$i = 0$ because we have~$D_0(\emptyset) = \emptyset$.
  When~$i \ge 1$, we assume from the induction hypothesis that values~$\cost(S, D_{i-1}(\mathcal{T}), \rw)$ are minimum for any~$\mathcal{T} \subseteq \sdom(S)$.
  Assume towards a contradiction that there exists an arc set~$B'$ that satisfies~$\cost(S, D_{i}(\mathcal{T}_{B'}), \rw) > \cost(S, B', \rw)$.
  Let~$B_{i}' \subseteq B'$ denote the set of arcs that have~$v_i$ as their initial vertex.
  We distinguish two cases depending on whether~$B_{i}'$ is empty.

  Case 1. If~$B_{i}' = \emptyset$, then~$B'$ consists of arcs whose initial vertices are in~$\{ v_1 \dots, v_{i - 1} \}$.
  Then, we have 
  \[
    \cost(S, B', \rw) \ge \cost(S, D_{i - 1}(\mathcal{T}_{B'}), \rw) \ge \cost(S, D_i(\mathcal{T}_{B'}), \rw).
  \]
  Here the first inequality follows from the induction hypothesis, and the second inequality follows from the assignment on \Cref{scc:dassign1} because~$\mathcal{T}_{B'} \subseteq \mathcal{T}_{B_{i-1}}$.

  Case 2. If~$B_{i}' \ne \emptyset$, then the cost of~$S$ associated with~$B'$ is
  \begin{align*}
    \cost(S, B', \rw)
    &\ge \cost(S, \mathcal{D}_{i - 1}(\mathcal{T}_{B'} \setminus \mathcal{T}_{B_i'}), \rw) + \max_{v_i u \in B_{i}'} \rw(v_i u) \\ 
    &\ge \cost(S, D_i(\mathcal{T}_{B'}), \rw).
  \end{align*}
  Here the first inequality follows from the induction hypothesis, and the second inequality follows from the assignment on \Cref{scc:dassign2}.
  In both cases, we have~$\cost(S, B', \rw) \ge \cost(S, D_i(\mathcal{T}_{B'}), \rw)$, and thus a contradiction is reached.

  It remains to analyze the running time.
  In the initialization phase, we iterate over all vertices (\Cref{initstart}) and all subsets of~$\mathcal{T}_{B_i}$ (\Cref{initassign}).
  Note that~$|\mathcal{T}_{B_i}| \le |\mathcal{S}_{G_\ell}| = c$ as shown in the above claim and hence there are at most~$2^c$ subsets.
  Each iteration takes~$O(c)$ time because~$B_i$ contains at most~$c$ arcs for any~$i \in [n_S]$.
  In the update phase, we iterate over all vertices (\Cref{veriter}), at most~$c$ neighbors (\Cref{niter}), and all subsets of~$\sdom(S)$ (\Cref{ssiter}).
  Each iteration takes~$O(c)$ time because~$D_i(\mathcal{T})$ contains at most~$c$ arcs for any~$i \in [n_S]$ and any~$\mathcal{T} \subseteq \sdom(S)$.
  Thus, the overall running time is~$O(2^c \cdot c^2 \cdot n)$. 
\end{proof}
}

\paragraph{Phase 3.} We finally present the search tree algorithm for \textsc{MinPAC} in \Cref{alg:minpac}. 
The algorithm ``guesses'' the connections between SCCs of~$G_\ell$ to obtain an optimal solution.
To this end, we first try all possible numbers of outgoing arcs from each SCC. %
The array~$\mathcal{C}$ contains after \cref{mainend} for each SCC~$S_i$ in the~$i$\textsuperscript{th} entry all SCCs that~$S_i$ has an arc to in the solution.

\begin{algorithm}[t]
  \caption{An exhaustive search algorithm for \textsc{MinPAC}}
  \label{alg:minpac}

  \SetKwFunction{FSolveMinPAC}{Search}

  \Fn{\FSolveMinPAC{$\rsg, \rw, \mcc$}}{
    $OptCost \leftarrow \infty$, $\mathcal{C} \leftarrow (\mathcal{S}_{G_\ell}(S_1), \dots, \mathcal{S}_{G_\ell}(S_c))$\;
    \For{$k \leftarrow c$ \KwTo $2c-2$}{\label{eguess1}
      \ForEach{$k_1, \dots, k_c \in \N$ \emph{such that} $k_1, \dots, k_c \ge 1$ \emph{and} $\sum^c_{i=1} k_i = k$}{
        \ForEach{$\mathcal{T}_{S_1}, \dots, \mathcal{T}_{S_c}$ \emph{such that} $\mathcal{T}_{S_i} \subseteq \sdom(S_i)$ \emph{and} $|\mathcal{T}_{S_i}| = k_i$ \emph{for any} $i \in [c]$}{
          $H^{\mathrm{aux}} \leftarrow (\{ v_1, \dots, v_c \}, \{ v_i v_j \mid S_j \in \mathcal{T}_{S_i} \})$\;\label{phase3:iterstart}
          \lIf{$|\mathcal{S}_{H^{\mathrm{aux}}}| > 1$}{\Continue}\label{scc:count}
          $Cost \leftarrow 0$\;
          \lForEach{$S \in \mathcal{S}_{G_\ell}$}{
            $Cost \leftarrow Cost + \cost(S, \mcc(S, \mathcal{T}_S), \rw)$%
          }
          \tcp{We assume that $\cost(S, \mcc(S, \mathcal{T}_S), \rw)$ is computed for any $S \in \mathcal{S}_{G_\ell}, \mathcal{T} \subseteq \sdom(S)$ in \Cref{scc:dp}}
          \lIf{$Cost < OptCost$}{
            $OptCost \leftarrow Cost, \mathcal{C} \leftarrow ( \mathcal{T}_{S_1}, \dots, \mathcal{T}_{S_c} )$\label{phase3:iterend}
          }
        }
      }
    }\label{mainend}
    \Return{$(V(G), A(G_\ell) \cup \bigcup^c_{i=1} \mcc(S_i, \mathcal{C}[i]))$}
  }
\end{algorithm}

\begin{lemma}
  \label{fpt:parameterc}
  Given a relevant subgraph~$\rsg$, a relevant weight function~$\rw$, and a minimum-cost connector~$\mcc \colon \mathcal{S}_{G_\ell} \times 2^{\mathcal{S}_{G_\ell}} \to 2^{A(\rsg)}$, \Cref{alg:minpac} computes an optimal solution of~$(\rsg, \rw)$ in~$O(n + m + 4^c \cdot c^{2c - 3/2})$ time. 
\end{lemma}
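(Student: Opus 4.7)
The plan has two parts: correctness, by matching each optimal solution with a tuple considered by the exhaustive search of \Cref{alg:minpac}, and a running-time analysis based on counting those tuples.

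For correctness, I would start with an arbitrary optimal solution $H^*$ of $(\rsg, \rw)$. Since every obligatory arc $vu$ has weight at most $\ell(v)$ and the vertex lower bound guarantees that $v$ pays at least $\ell(v)$ in any optimum, adding all of $A(G_\ell)$ to $H^*$ does not increase its cost, so without loss of generality $A(G_\ell) \subseteq A(H^*)$. Define $\mathcal{T}^*_{S_i} := \{ S' \in \mathcal{S}_{G_\ell} \setminus \{S_i\} : \exists vu \in A(H^*) \text{ with } v \in S_i, u \in S' \}$. Contracting each $S_i$ of $H^*$ to a single vertex yields a strongly connected quotient digraph on $c$ vertices; this quotient is exactly the auxiliary graph $H^{\mathrm{aux}}$ built on \Cref{phase3:iterstart} for the tuple $\mathcal{T}^*$, so the check on \Cref{scc:count} accepts it. Every strongly connected digraph on $c$ vertices contains a spanning strongly connected subgraph with at most $2(c-1)$ arcs (the union of an in-arborescence and an out-arborescence rooted at any fixed vertex), and deleting the redundant inter-SCC arcs of $H^*$ cannot increase its cost, so I may assume $\sum_{i=1}^c |\mathcal{T}^*_{S_i}| \in [c, 2c-2]$, placing the tuple in the enumeration range. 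Replacing the inter-SCC arcs of $H^*$ by those of $\bigcup_{i=1}^c \mcc(S_i, \mathcal{T}^*_{S_i})$ gives the subgraph $H'$ that the algorithm returns; $H'$ is strongly connected because each $\rsg[S_i]$ is strongly connected by property~\ref{def:rel2} of \Cref{def:rel} and the quotient is strongly connected by construction, and the cost of $H'$ is at most that of $H^*$ by \Cref{def:mcc}. Conversely, every tuple accepted on \Cref{scc:count} yields a feasible strongly connected spanning subgraph, so the minimum is attained at $\mathcal{T}^*$ and the returned subgraph is optimal.

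For the running time, initialising $OptCost$ and $\mathcal{C}$ takes $O(c)$ time and assembling the returned subgraph on the last line takes $O(n+m)$ time. The main search loop enumerates all tuples $(\mathcal{T}_{S_1}, \ldots, \mathcal{T}_{S_c})$ with $k_i := |\mathcal{T}_{S_i}| \ge 1$ and $K := \sum_{i=1}^c k_i \in [c, 2c-2]$. Each such tuple corresponds to a choice of at most $2c-2$ arcs in the complete digraph on $c$ vertices, so the number of tuples is bounded by $\sum_{K=c}^{2c-2} \binom{c(c-1)}{K}$. Using $\binom{n}{k} \le (en/k)^k$, the dominant term is $\binom{c(c-1)}{2c-2} \le (ec/2)^{2c-2} = c^{2c-2} \cdot (e^2/4)^{c-1}$, and summing over $K$ adds a factor of at most $c$, yielding $O(c^{2c-1} \cdot (e^2/4)^c)$ tuples. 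Each iteration performs $O(c)$ work to build $H^{\mathrm{aux}}$, run Tarjan's algorithm on a digraph with $c$ vertices and at most $2c-2$ arcs, and sum $c$ precomputed cost values from \Cref{scc:dp}. Since $e^2/4 < 4$ strictly, we have $c^{2c} \cdot (e^2/4)^c = o(4^c \cdot c^{2c-3/2})$, so the total search cost lies within $O(4^c \cdot c^{2c-3/2})$; combining with $O(n + m)$ for output assembly gives the claimed bound.

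The main obstacle I anticipate is the cost-accounting step inside the correctness argument: the algorithm's objective $\sum_S \cost(S, \mcc(S, \mathcal{T}_S), \rw)$ sums maximum weights over only the connector arcs, and one must argue carefully that the obligatory arcs in $A(G_\ell)$ included in the returned subgraph never inflate any per-vertex cost above what is paid in $H^*$. This rests on $\rw(vu) \le \ell(v)$ for every $vu \in A(G_\ell)$ together with the vertex-lower-bound guarantee, applied SCC by SCC. The counting step in the running-time analysis is routine once the right upper bound on the number of enumerated tuples is identified.
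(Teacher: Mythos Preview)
Your proposal is correct and follows the same exhaustive-search skeleton as the paper, but differs in two noteworthy respects.

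\emph{Correctness.} The paper's own argument is very terse: it simply notes that the auxiliary graph $H^{\mathrm{aux}}$ is the quotient by SCCs and that exhaustive search finds the optimum. Your version is more explicit: you justify why an optimal solution can be assumed to contain all obligatory arcs, why its quotient has at most $2c-2$ arcs (via the in/out-arborescence trick), and why replacing inter-SCC arcs by the $\mcc$ connectors cannot increase the cost. You also flag the cost-accounting subtlety---that the algorithm's $Cost$ variable sums only over connector arcs while the returned subgraph also contains $A(G_\ell)$---which the paper does not discuss; your planned resolution via $\rw(vu)\le\ell(v)$ for obligatory arcs and the vertex-lower-bound inequality is the right one.

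\emph{Running time.} Here your counting genuinely differs from the paper's. The paper first counts compositions $(k_1,\dots,k_c)$ of $k$ via $\binom{k-1}{c-1}\in O(4^c/\sqrt{c})$ using Stirling, then bounds the tuples per composition by $\prod_i\binom{c}{k_i}\le c^k$, and sums over $k$. You instead observe that each tuple is a subset of at most $2c-2$ arcs of the complete digraph on $c$ vertices and bound $\binom{c(c-1)}{2c-2}\le(ec/2)^{2c-2}$ directly. Your route is shorter and in fact gives a tighter exponential base ($e^2/4\approx 1.85$ versus $4$), comfortably inside the stated $O(4^c\cdot c^{2c-3/2})$. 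Both arguments spend $O(c)$ per iteration on building $H^{\mathrm{aux}}$, testing strong connectivity, and summing precomputed costs.
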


{
\begin{proof}
  We first show the correctness of \Cref{alg:minpac}.
  In each iteration step (\Crefrange{phase3:iterstart}{phase3:iterend}), we 
  construct an auxiliary graph~$H^{\mathrm{aux}}$, which is basically a graph obtained from~$(V(G), \bigcup^c_{i=1} \mcc(S_i, \mathcal{T}_{S_i}))$ by contracting each SCC of~$G_\ell$ into a single vertex.
  Since our algorithm performs an exhaustive search, it finds a graph of cost~$\opt((\rsg, \rw))$.

  We now analyze the running time of the algorithm.
  For each~$c \le k \le 2c - 2$, the number of sets of integers~$\{ k_1, \dots, k_c \}$ that satisfy~$k_i \ge 1$ for all~$i \in [c]$ and~$\sum^c_{i=1} k_i = k$ is
  \begin{align*}
    \binom{k - 1}{c - 1} & \le \binom{2c - 3}{c - 1}
    = \frac{(2c - 3)!}{(c - 1)! \, (c - 2)!} \\
    &\in O\left( \frac{\sqrt{2c - 3}}{\sqrt{c - 1}\sqrt{c - 2}} \cdot \frac{(2c - 3)^{2c - 3}}{(c - 1)^{c - 1} (c - 2)^{c - 2}} \right)
    = O \left(\frac{4^c}{\sqrt{c}} \right),
  \end{align*}
  where the membership is due to Stirling's approximation. 
  For each fixed set of integers~$\{k_1, \dots, k_c\}$, the number of sets~$\{ \mathcal{T}_{S_1}, \dots, \mathcal{T}_{S_c} \}$ the algorithm generates is
  \begin{align*}
    \prod^c_{i = 1} \binom{c}{k_i} \le \prod^c_{i =1} c^{k_i} = c^{\sum^c_{i=1} k_i} = c^k.
  \end{align*}
  So the total number of iterations (\Crefrange{phase3:iterstart}{phase3:iterend}) is
  \begin{align*}
    O\left(\frac{4^c}{\sqrt{c}} \cdot \sum^{2c - 2}_{k = c} c^k\right)
    = O(4^c \cdot c^{2c - 5/2}).
  \end{align*}
  We claim that each iteration step runs in~$O(c)$ time.
  Computing the SCCs of~$H^{\mathrm{aux}}$ takes~$O(c)$ time because~$H^{\mathrm{aux}}$ contains~$c$ vertices and at most $2c - 2$~arcs and the algorithm spends~$O(c)$ time to compute the cost of~$H$ and update~$OptCost$ and connection~$\mathcal{C}$.
  Constructing the output graph takes~$O(n + m)$ time.
  Thus, the overall running time is~$O(n + m + 4^c \cdot c^{2c - 3/2})$. 
\end{proof}
}

Combining \Cref{scc:redalgo,scc:dp,alg:minpac} we arrive at our main theorem of this section.

\begin{theorem}
    \label{thm:SCC-Algo}
    \textsc{MinPAC} can be solved in~$O(c^2 \cdot 2^c \cdot n + m + 4^c \cdot c^{2c - 3 / 2})$~time.
\end{theorem}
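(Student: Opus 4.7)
The plan is simply to chain the three phases developed earlier in this section. First I would compute the trivial vertex lower bound $\ell(v) = \min_{vu \in A(G)} w(vu)$ in $O(m)$ time by a single pass over the arcs. I would then build the obligatory subgraph $G_\ell$ and extract the SCC partition $\mathcal{S}_{G_\ell}$ in $O(n + m)$ time via Tarjan's algorithm, which both fixes the parameter $c = |\mathcal{S}_{G_\ell}|$ and provides the vertex-to-SCC labelling that the subsequent phases consume.

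Next I would run the three algorithms in sequence. \Cref{scc:redalgo} applied to $(G, w, \ell)$ produces a relevant subgraph $\rsg$ and relevant weight function $\rw$ in $O(m)$ time by \Cref{scc:ared}, with the guarantee $\opt((G, w)) = \opt((\rsg, \rw))$. Feeding $(\rsg, \rw)$ into \Cref{scc:dp} yields, by \Cref{lemma:dp}, a minimum-cost connector $\mcc$ in $O(2^c \cdot c^2 \cdot n)$ time. Finally, passing $(\rsg, \rw, \mcc)$ to \Cref{alg:minpac} returns an optimal solution of $(\rsg, \rw)$ in $O(n + m + 4^c \cdot c^{2c - 3/2})$ time by \Cref{fpt:parameterc}; since the optimal costs of $(\rsg, \rw)$ and $(G, w)$ coincide, this is also an optimal solution of the original input.

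Summing the individual bounds, the initial lower-bound and SCC computation contributes $O(n + m)$, Phase 1 contributes $O(m)$, Phase 2 contributes $O(2^c \cdot c^2 \cdot n)$, and Phase 3 contributes $O(n + m + 4^c \cdot c^{2c - 3/2})$. The linear terms are absorbed into $O(c^2 \cdot 2^c \cdot n + m)$, giving the claimed total of $O(c^2 \cdot 2^c \cdot n + m + 4^c \cdot c^{2c - 3/2})$.

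There is no substantive obstacle beyond verifying that the input/output interfaces of the three phases line up. The only subtle point is that Phases 2 and 3 operate on $\rsg$ while reasoning about the SCC decomposition $\mathcal{S}_{G_\ell}$ computed before Phase 1; this is legitimate because property~\ref{def:rel2} of \Cref{def:rel} ensures $G[S] = \rsg[S]$ for every $S \in \mathcal{S}_{G_\ell}$, so $\mathcal{S}_{\rsg} = \mathcal{S}_{G_\ell}$ and the partition can be reused throughout the pipeline without recomputation.
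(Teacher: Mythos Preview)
Your proposal is correct and follows essentially the same approach as the paper: chain \Cref{scc:redalgo,scc:dp,alg:minpac} and sum their running-time bounds from \Cref{scc:ared,lemma:dp,fpt:parameterc}. You add slightly more detail than the paper (explicitly computing the trivial lower bound and the SCC partition up front, and justifying via \Cref{def:rel}\ref{def:rel2} that the SCC labelling carries over to $\rsg$), but the structure and the key ingredients are identical.
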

{
\begin{proof}
  Let~$(G, w)$ be an instance of \textsc{MinPAC}.
  We run \Cref{scc:redalgo,scc:dp,alg:minpac} sequentially to obtain an optimal solution~$H$ of~$(\rsg, \rw)$.
  Since~$\rsg$ is a relevant subgraph of~$G$ it follows from \Cref{scc:arcred} that the graph~$H$ is also an optimal solution of~$(G, w)$.

  The overall running time is then
  \begin{align*}
    O(cn + m) + O(2^c \cdot c^2 \cdot n) + O(n + m + 4^c \cdot c^{2c - 3/2}) \\
    = O(c^2 \cdot 2^c \cdot n + m + 4^c \cdot c^{2c - 3 / 2}).
  \end{align*}
  This concludes the proof of the theorem.
\end{proof}
}

\section{Parameterization by the number of power levels} \label{sec:vcn}

It is fair to assume that the nodes cannot transmit signals with arbitrary power levels due to practical limitations \cite{Carmi2007}. 
In fact, many researchers have studied approximation algorithms for the \textsc{MinPAC} problem when only two power levels are available \cite{Rong2004,Carmi2007, Calinescu2013approximate,Calinescu2016}.
In this section, we consider the case~${w \colon A(G) \to Q}$, where the set of integers~$Q = \{ p_1, \dots, p_q \}$ represents available power levels.
The parameter~$q$---``the number of numbers''---has been advocated by \citet{FGR12}.
The problem remains NP-hard even when~$q = 2$ \cite{chen1989strongly}, as also can be seen in our hardness result (\Cref{theo:hardness}).
Thus, fixed-parameter tractability is unlikely with this parameter alone. 
However, using an additional parameter may alleviate this problem.
We consider the \emph{vertex cover number}, as many problems are known to become tractable when this parameter is bounded.
Here we define the vertex cover number for a directed graph as the vertex cover number of the underlying undirected graph.
Recall that the vertex cover number for an undirected graph is the minimum number of vertices that have to be removed to make it edgeless.
Computing a minimum-cardinality vertex cover is NP-hard 
but 
any maximal matching (which can be found in linear time) gives a factor\nobreakdash-2 approximation.
We present a partial kernelization (unbounded weights) with respect to~$q + x$, where~$x$ is the size of a given vertex cover.
Afterwards, we strengthen this result to a proper polynomial kernel (with a worse but still polynomial running time).

\begin{theorem} \label{thm:VCKern}
  Let~$I = (G, w)$ be a \textsc{MinPAC}-instance where~$w \colon A(G) \to Q$ and~$Q \in \N^q$.
  Given~$I$ and a vertex cover~$X$ for~$G$ of size~$x$, one can compute an instance~$I'$ of \textsc{MinPAC} with at most~${(q + 1)^{2 x} + x}$ vertices and a value~$d \in \N$ such that~$\opt(I) = \opt(I') + d$ in~$O(\min(\{ x, \log_q n \}) \cdot n + m)$ time.
\end{theorem}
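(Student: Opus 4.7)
Since $V(G)\setminus X$ is an independent set in $U_G$, every non-cover vertex~$v$ is adjacent only to vertices of $X$. The plan is to associate with $v$ its \emph{signature} $\sigma(v)\in (Q\cup\{\bot\})^{2x}$ listing, for each $u\in X$, the weights $w(vu)$ and $w(uv)$ (using $\bot$ for absent arcs). Two non-cover vertices sharing a signature are freely interchangeable as far as any strongly connected spanning subgraph is concerned. I would partition $V(G)\setminus X$ into signature classes $C_1,\dots,C_t$ with $t\le (q+1)^{2x}$, keep a single representative $v^*_i\in C_i$ per class, delete the others, and set
\[
  d \;=\; \sum_{i=1}^{t}(|C_i|-1)\,\alpha_i,\qquad \alpha_i:=\min_{v^*_i u\in A(G)}w(v^*_i u).
\]
Here $\alpha_i$ is well defined because every vertex of a strongly connected $G$ has at least one out-arc, and it does not depend on the choice of representative. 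The resulting instance $I'$ has at most $x+(q+1)^{2x}$ vertices.

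\textbf{Correctness via signature-preserving rerouting.} Abbreviate $c_H(v):=\max_{vu\in A(H)}w(vu)$. For $\opt(I)\le\opt(I')+d$, take an optimal $H'$ of $I'$ and reinsert each deleted $v\in C_i$ together with (i)~the copy $uv$ of a fixed in-arc $uv^*_i\in A(H')$, and (ii)~an out-arc of weight exactly $\alpha_i$. Signature agreement guarantees that both arcs lie in~$G$, that the copied in-arc has the same weight as $uv^*_i$, so $u$'s cost is unaffected, while $v$ contributes exactly $\alpha_i$; all reinsertions together add precisely $d$. For the reverse direction, start from an optimal $H$ of $I$ and build $H'$ by rerouting, for each $v\in C_i\setminus\{v^*_i\}$, every $uv\in A(H)$ to $uv^*_i$ and every $vu'\in A(H)$ to $v^*_i u'$, and then deleting $v$. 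Again, signature agreement ensures all rerouted arcs exist in $G$ with identical weights, so the cost of every $u\in X$ stays fixed and strong connectivity is preserved because every path through a deleted $v$ entered and left $v$ via~$X$.

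\textbf{Cost accounting and running time.} The delicate step---and the one I expect to require the most care---is bounding the growth of $c_{H'}(v^*_i)$: after rerouting one has $c_{H'}(v^*_i)\le \beta_i := \max_{v\in C_i}c_H(v)$, so the change of total cost restricted to class~$C_i$ is at most
\[
  \beta_i \;-\; \sum_{v\in C_i}c_H(v) \;\le\; -(|C_i|-1)\,\alpha_i,
\]
because $\beta_i$ equals $c_H(v)$ for a single $v\in C_i$ while the remaining $|C_i|-1$ terms in the sum each contribute at least $\alpha_i$. Summing over classes yields $\sum_{v\in V(I')}c_{H'}(v)\le \sum_{v\in V(I)}c_H(v)-d$, closing $\opt(I)=\opt(I')+d$. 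For the running time, signatures can be assembled in $O(m)$ by one pass through $A(G)$, leaving $n$ strings of length $2x$ over an alphabet of size $q+1$. Radix-sorting these groups them in $O(xn)$ time. Alternatively, inserting them into a $(q+1)$-ary trie and stopping each insertion at the point where a fresh leaf branches off visits only $O(\log_{q+1}n)=O(\log_q n)$ nodes per vertex, because the trie has at most $n$ leaves. Choosing the faster of the two strategies yields the claimed $O(\min\{x,\log_q n\}\cdot n + m)$ bound.
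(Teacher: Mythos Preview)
Your reduction and correctness argument are essentially the paper's: partition $V(G)\setminus X$ by the weight-signature toward~$X$, keep one representative per class, and account for each deleted vertex by its minimum out-arc weight. The rerouting argument and the inequality $\beta_i-\sum_{v\in C_i}c_H(v)\le -(|C_i|-1)\alpha_i$ are correct and match the paper's Reduction~Rule~3.2 (applied class by class rather than one vertex at a time).

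The gap is in the running-time analysis. Your trie claim---that inserting $n$ length-$2x$ strings into a $(q+1)$-ary trie costs $O(\log_{q+1}n)$ per string ``because the trie has at most $n$ leaves''---is false: the depth of a trie (compressed or not) is not bounded by the logarithm of its leaf count. With two strings that first differ in position $2x$, a single insertion already touches $\Theta(x)$ characters. So the trie does not deliver the $O(n\log_q n)$ alternative, and without it you have only the $O(xn+m)$ bound from radix sort. The paper obtains the $\min\{x,\log_q n\}$ term by a much simpler observation you omitted: if $n\le (q+1)^{2x}+x$ then the instance already meets the size bound and one returns it untouched (cost $O(n+m)$); otherwise $(q+1)^{2x}<n$ forces $2x<\log_{q+1}n$, so the $O(xn+m)$ procedure is automatically $O(n\log_q n + m)$. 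Replacing your trie paragraph with this case distinction closes the gap.
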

{
In order to prove \Cref{thm:VCKern}, we first observe that there are some conditions under which a vertex can be included or removed without losing the strong connectivity.
Notably, we use \Cref{obs:remv} to remove ``twin'' vertices.

\begin{observation} \label{obs:remv}
  Let~$G$ be a strongly connected graph with~$u \in V(G)$.
  If there exists a vertex~$u' \in V(G)$ with~$N^+_G(u) \subseteq N^+_G(u')$ and~$N^-_G(u) \subseteq N^-_G(u')$, the graph~$G[V(G) \setminus \{ u \}]$ is strongly connected.
\end{observation}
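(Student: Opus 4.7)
The plan is to show reachability between any two surviving vertices by surgically replacing $u$ with $u'$ in paths guaranteed by the strong connectivity of $G$. Concretely, let $a, b \in V(G) \setminus \{u\}$ be arbitrary, and pick any directed path $P = (a = v_0, v_1, \dots, v_k = b)$ in $G$; such a path exists because $G$ is strongly connected. If $u$ does not occur on $P$, then $P$ is already a path in $G[V(G) \setminus \{u\}]$ and we are done.

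Otherwise, consider any index $i$ with $v_i = u$. Since $a \neq u$ and $b \neq u$, we have $0 < i < k$, so the arcs $v_{i-1} v_i = v_{i-1} u$ and $v_i v_{i+1} = u v_{i+1}$ lie in $A(G)$. Hence $v_{i-1} \in N^-_G(u) \subseteq N^-_G(u')$ and $v_{i+1} \in N^+_G(u) \subseteq N^+_G(u')$, which gives the arcs $v_{i-1} u'$ and $u' v_{i+1}$ in $A(G)$. Replacing each occurrence of $u$ in $P$ by $u'$ therefore yields a walk $W$ from $a$ to $b$ in $G - u$. A walk implies the existence of a path, so $b$ is reachable from $a$ in $G[V(G) \setminus \{u\}]$. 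Since $a, b$ were arbitrary, this subgraph is strongly connected.

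I do not expect any real obstacle here; the only subtle point is that the replacement may create a non-simple walk (for instance if $u'$ already appears on $P$, or if the same index is preceded and followed by $u'$), but this is harmless because reachability is preserved by walks. One could also note that the argument does not require $u \neq u'$ or $u u' \notin A(G)$: the inclusions on in- and out-neighborhoods are all that is used.
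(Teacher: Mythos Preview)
Your argument is correct and is exactly the natural one. The paper does not supply a proof for this observation at all---it is stated without justification---so there is no alternative approach to compare against.

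One small correction to your closing remark: the argument \emph{does} need $u' \neq u$. If $u' = u$, then replacing $u$ by $u'$ in $P$ changes nothing, and the resulting walk still passes through $u$; hence it does not lie in $G[V(G)\setminus\{u\}]$. Indeed, without the assumption $u' \neq u$ the observation itself is false (take $G$ a directed cycle and $u' = u$). The paper clearly intends $u'$ to be distinct from $u$, as is evident from its use in the proof of the subsequent reduction rule, where $u' \in \mathcal{P}_{r_1,\dots,r_{2x}} \setminus \{u\}$.
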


\begin{observation} \label{obs:addv}
  Let~$G$ be a strongly connected graph.
  For any vertices~$v, v' \in V(G)$, $u \notin V(G)$, the graph~$G + uv + v'u$ is strongly connected. 
\end{observation}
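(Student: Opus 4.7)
The plan is to prove strong connectivity of $G' := G + uv + v'u$ directly from the definition: show that for every ordered pair of vertices in $V(G') = V(G) \cup \{u\}$ there is a directed path in $G'$ connecting them. Since $G$ is strongly connected by hypothesis and $G' \supseteq G$, all old paths survive, so the only thing that needs to be checked is what happens when $u$ is one (or both) of the endpoints.

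I would split into three cases according to the role of $u$. First, for any two vertices $a, b \in V(G)$, a directed $a$-to-$b$ path already exists in $G$ and is still present in $G'$. Second, for an endpoint $b \in V(G)$ and start vertex $u$, the arc $uv$ takes us from $u$ to $v$, and then a directed $v$-to-$b$ path in $G$ (which exists by strong connectivity of $G$) extends this into a $u$-to-$b$ walk in $G'$. Third, symmetrically, for a start vertex $a \in V(G)$ and endpoint $u$, a directed $a$-to-$v'$ path in $G$ followed by the arc $v'u$ gives an $a$-to-$u$ walk in $G'$. Finally, for the pair $(u,u)$, combine the first two: go $u \to v$, then take any $v$-to-$v'$ path in $G$, then use $v'u$.

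This covers all ordered pairs, so $G'$ is strongly connected. I expect no real obstacle: the statement is a small structural observation, and the argument is a direct case analysis that only invokes strong connectivity of $G$ together with the two newly added arcs $uv$ and $v'u$. The only thing to be mildly careful about is not to implicitly assume $v \ne v'$; the argument above never uses $v \ne v'$ (when $v = v'$, the $v$-to-$v'$ path is trivial), so the statement holds uniformly.
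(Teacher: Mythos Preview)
Your argument is correct: the case analysis covers all ordered pairs in $V(G') = V(G)\cup\{u\}$, and each case uses only strong connectivity of $G$ together with the two new arcs. The paper states this as an observation without proof, so there is nothing to compare against; your write-up is exactly the routine verification one would expect.
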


We define ``types'' for vertices outside the vertex cover according to the weights of their incident arcs.
This helps us to reduce the number of vertices using the observations above. 
Recall that we write~$w(vu) = \infty$ if~$vu \not\in A$. 

\begin{definition}
  Let~$(G, w)$ be an instance of \textsc{MinPAC} with~$w \colon A(G) \to Q$ and~$Q = \{ p_1, \dots, p_q \} \in \N^q$.
  Let~$X = \{ v_1, \dots, v_x \}$ be a vertex cover of~$G$.
  The \emph{vertex cover partition} is the partition~$\mathcal{P}$ of vertices in~$V(G) \setminus X$ into sets
  \begin{align*}
    \mathcal{P}_{r_1, \dots, r_{2x}} = \{ u \in V(G) \setminus X \mid \forall i \in [x].\;w(u v_i) = p_{r_i} \wedge w(v_i u) = p_{r_{i+x}}  \},
  \end{align*}
  for each~$r_1, \dots, r_{2x} \in [q + 1]$.
  Here we set~$p_{q+1} = \infty$.
\end{definition}

We initialize~$d$ with 0.
In our reduction rule, we remove vertices such that, after the reduction is completed,  there is at most one vertex in each set of the vertex cover partition.

\begin{reductionrule} \label{red:vcr}
  Let~$\mathcal{P}_{r_1, \dots, r_{2x}}$ be a set of the vertex cover partition with $|\mathcal{P}_{r_1, \dots, r_{2x}}| > 1$.
  Delete an arbitrary vertex~$u \in \mathcal{P}_{r_1, \dots, r_{2x}}$ and increase~$d$ by~$\min_{i \in [x]} p_{r_i}$. 
\end{reductionrule}

Note that since the input graph is strongly connected, the increase in $d$ in \Cref{red:vcr} is at most $\max_{i \in [q]} p_i < \infty$.

\begin{lemma}
  \Cref{red:vcr} is correct.
\end{lemma}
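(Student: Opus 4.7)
The plan is to prove the single-step identity~$\opt(I) = \opt(I') + \mu$, where $\mu := \min_{i \in [x]} p_{r_i}$ and $I' = (G - u, w)$; correctness of the rule (which increases~$d$ by~$\mu$) then follows. Let $u' \in \mathcal{P}_{r_1, \dots, r_{2x}}$ be a second vertex in the class, which exists by the precondition. Because~$X$ is a vertex cover and~$u, u' \notin X$, every in- and out-neighbor of~$u$ and~$u'$ in~$G$ lies in~$X$; the partition-class definition additionally guarantees $N^+_G(u) = N^+_G(u')$, $N^-_G(u) = N^-_G(u')$, and matching weights on corresponding arcs, so $u$ and $u'$ are \emph{twins} in~$G$. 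The value~$\mu$ is finite because~$G$ is strongly connected, hence~$u$~has at least one outgoing arc in~$G$.

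For the upper bound~$\opt(I) \le \opt(I') + \mu$, take an optimal strongly connected spanning subgraph~$H'$ of~$G - u$. Pick~$v_j \in X$ with~$w(u v_j) = \mu$, and some~$v_k \in X$ with~$v_k u' \in A(H')$, which exists by strong connectivity of~$H'$. By the twin property, $v_k u \in A(G)$, so $H := H' + u v_j + v_k u$ is a subgraph of~$G$, and by~\Cref{obs:addv} it is strongly connected. The arc~$u v_j$ is incoming for~$v_j$, so it does not affect~$v_j$'s cost; the added outgoing arc~$v_k u$ of~$v_k$ has weight~$w(v_k u) = w(v_k u')$, already at most the cost of~$v_k$ in~$H'$; every other vertex in~$V(G) - u$ pays the same as in~$H'$; and $u$~pays exactly~$\mu$. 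Hence $\opt(I) \le \cost(H) = \opt(I') + \mu$.

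For the lower bound~$\opt(I) \ge \opt(I') + \mu$, take an optimal strongly connected spanning subgraph~$H$ of~$G$ and write~$c_z$ for the cost of vertex~$z$ in~$H$. Form~$H^+$ from~$H$ by adding, for every~$a \in X$ with~$au \in A(H)$, the arc~$a u'$, and for every~$b \in X$ with~$u b \in A(H)$, the arc~$u' b$; all these arcs lie in~$A(G)$ by the twin property. Then $N^+_{H^+}(u) \subseteq N^+_{H^+}(u')$ and $N^-_{H^+}(u) \subseteq N^-_{H^+}(u')$, so by~\Cref{obs:remv} the graph~$H^* := H^+ - u$ is a strongly connected spanning subgraph of~$G - u$. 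For every~$a \in X$ with a newly added arc~$a u'$ one has~$w(a u') = w(au) \le c_a$, so $a$'s cost in~$H^*$ is at most~$c_a$; the cost of~$u'$ in~$H^*$ equals~$\max(c_{u'}, c_u)$ since~$w(u'b) = w(ub)$ on the added arcs and~$u'u \notin A(G)$ (as $X$ is a vertex cover); all remaining vertices retain their cost from~$H$. Therefore
\begin{align*}
\cost(H^*) \le \cost(H) - c_u + \max(0, c_u - c_{u'}) = \cost(H) - \min(c_u, c_{u'}).
\end{align*}
Since $u$ and~$u'$ each have at least one outgoing arc in~$H$ (of weight at least~$\mu$, as all their outgoing arcs target~$X$), $\min(c_u, c_{u'}) \ge \mu$, yielding $\opt(I') \le \cost(H^*) \le \opt(I) - \mu$.

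I expect the lower bound to be the main obstacle: a naive deletion of~$u$ from~$H$ may break strong connectivity, and rerouting via~$u'$ can raise~$u'$'s cost to~$\max(c_u, c_{u'})$. The key observation making the accounting work is that both~$u$ and~$u'$ are already forced to pay at least~$\mu$ in~$H$, so the savings from deleting~$u$ dominate the possible increase at~$u'$ by at least~$\mu$.
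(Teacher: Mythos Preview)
Your proof is correct and follows essentially the same approach as the paper: both directions use \Cref{obs:addv} and \Cref{obs:remv} in the same way, and both exploit that $u$ and $u'$ are twins with identical arc weights to their common neighborhood in~$X$. The only cosmetic difference is in the lower bound: the paper first swaps the roles of $u$ and $u'$ so that without loss of generality $c_u \le c_{u'}$, which makes the accounting trivial, whereas you keep both cases and compute the saving as $\min(c_u, c_{u'})$ directly; the two arguments are equivalent.
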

\begin{proof}
  Let~$I = (G, w)$ be an instance of \textsc{MinPAC} and let~$I'= (G', w')$ be the instance obtained by deleting vertex~$u$ in a set~$\mathcal{P}_{r_1, \dots, r_{2x}}$ of the vertex cover partition for~$G$, as specified in \Cref{red:vcr}.
  We show that~$\opt(I) = \opt(I') +  \min_{i \in [x]} p_{r_i}$.

  Let~$H$ be an optimal solution of~$I$.
  Since~$|\mathcal{P}_{r_1, \dots, r_{2x}}| > 1$, there exists a vertex~$u' \in \mathcal{P}_{r_1, \dots, r_{2x}} \setminus \{ u \}$.
  We can assume without loss of generality that~$\max_{uv \in A(H)} w(uv) \le \max_{(u'v) \in A(H)} w(u'v)$ holds: if it does not hold, we can exchange the role of~$u$ and~$u'$ in~$H$ without changing the cost of the solution, that is, we can update~$H$ to~$H'$ with
  \begin{align*}
    A(H') := {}& \{vv' \mid vv' \in A(H) \wedge \{v,v'\} \cap \{u,u'\} = \emptyset \} \cup \{uv \mid u'v \in A(H)\}  \\
             & \cup \{vu \mid vu' \in A(H)\} \cup \{u'v \mid uv \in A(H)\} \cup \{vu' \mid vu \in A(H)\}.   
  \end{align*}
  Then, we can assume that~$N^+_H(u) \subseteq N^+_H(u')$ and~$N^-_H(u)\subseteq N^-_H(u')$ hold (otherwise we can add the missing arcs to~$H$ without additional cost). 
  Then, it follows from \Cref{obs:remv} that~$G[V(G')]$ is a solution of~$I'$.
  Its cost is at most~$\opt(I) - \min_{i \in [x]} p_{r_i}$ because~$u$ pays at least~$\min_{i \in [x]} p_{r_i}$ in~$H$.
  This shows that~$\opt(I) \ge \opt(I') + \min_{i \in [x]} p_{r_i}$.
  For the other direction, suppose that~$H'$ is an optimal solution of~$I'$.
  Let~$u' \in \mathcal{P}_{r_1, \dots, r_{2x}} \setminus \{ u \}$ be a vertex and let~$v$ be a vertex with~$w(u'v) = \min_{i \in [x]} p_{r_i}$.
  Since~$H'$ is strongly connected, there exists a vertex~$v' \in X$ with~$v' u' \in A(H')$.
  Due to \Cref{obs:addv}, $H' + uv + v'u$ is strongly connected.
  Observe that the cost for~$u$ is~$\min_{i \in [x]} p_{r_i}$ and the cost for~$v'$ remains unchanged because~$v'$ pays at least~$w(v'u') = w(v'u)$ in~$H'$.
  Hence, $\opt(I) = \opt(I') + \min_{i \in [x]} p_{r_i}$.
\end{proof}

We have shown that \Cref{red:vcr} is correct.
It remains to show that it can be applied in~$O(xn + m)$ or $O(n \log_q n + m)$ time to complete the proof of \Cref{thm:VCKern}.

\begin{proof}[Proof of \Cref{thm:VCKern}]
  We present a procedure that transforms an instance of \textsc{MinPAC}~$I = (G, w)$ into another instance~$I' = (G', w')$ with at most~$(q + 1)^{2x} + x$ vertices in~$O(\min(\{ x, \log_q n \}) \cdot n + m)$ time.
  In our transformation, we distinguish two cases depending on the input size. 

  Case 1.
  If~$n \le (q + 1)^{2x} + x$, then we return~$I$ as the output of the transformation with~$d = 0$. 

  Case 2.
  If~$n > (q + 1)^{2x} + x$, then we apply \Cref{red:vcr} exhaustively.
  Let~$\mathcal{P}$ be the vertex cover partition.
  There are at most~$(q + 1)^{2x}$ sets of~$\mathcal{P}$ and each set yields at most one vertex in~$G'$.
  Thus, the reduced instance has at most~$(q + 1)^{2x} + x$ vertices.
  We show that the transformation can be performed in~$O(xn + m)$ time.
  We first build a~$2x$-dimensional table~$D$, where for each dimension there are~$q + 1$ values.
  All the~$(q+1)^{2x}$ entries of~$D$ are initialized as false.
  (This can be done in~$O(n)$ time, since $n > (q + 1)^{2x} + x$.)
  The entry~$D[r_1,\dots,r_{2x}]$ represents whether a vertex in the set~$\mathcal{P}_{r_1, \dots, r_{2x}}$ has been found.
  We iterate through all vertices in~$V(G) \setminus X$.
  For each vertex~$u \in V(G) \setminus X$, we set the corresponding entry in~$D$ to true if it is false, and we remove~$u$ and its incident arcs if it is true.
  Since accessing an entry in~$D$ takes~$O(x)$ time and removing~$u$ takes~$O(\deg_G(u))$ time, the transformation overall takes~$O(xn + m)$~time. 
  Note that $n > (q + 1)^{2x} + x$ yields that $2x < \log_q n$ and hence the reduction can also be done in~$O(n \log_q n + m)$ time.
\end{proof}
}

Notice that \Cref{thm:VCKern} does not show a kernel for the parameter combination vertex cover~$x$ plus number of numbers~$q$.
In order to obtain a kernel, we will next show how to shrink the weights.

\begin{theorem}
    \label{thm:weights}
    Let~$I=(G,w)$ be an instance of \textsc{MinPAC} where~$G$ contains~$n$ vertices and~$m$ edges.
    There is a polynomial-time algorithm that computes a new weight function~$\hat{w}$ such that~$||\hat{w}||_{\infty} < 2^{4m^3}(4nm+1)^{m (m+2)}$ and such that any optimal solution~$T=(V,F)$ of~$(G,w)$ is also an optimal solution for~$(G,\hat{w})$.
\end{theorem}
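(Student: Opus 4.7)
The plan is to apply the Frank--Tardos theorem on simultaneous Diophantine approximation (the standard backbone of such weight-shrinking statements, used in \cite{BBFNN}), which, given a rational vector $w \in \mathbb{Q}^m$ and an integer $N$, produces in polynomial time an integer vector $\hat w \in \mathbb{Z}^m$ with $||\hat w||_\infty \le 2^{4m^3} N^{m(m+2)}$ and such that $\sign(w \cdot b) = \sign(\hat w \cdot b)$ for every $b \in \mathbb{Z}^m$ with $||b||_1 \le N - 1$. Choosing $N = 4nm + 1$ matches the bound claimed in the statement, so the task reduces to showing that such a $\hat w$ preserves the optimal solutions of MinPAC.

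First, I reformulate the objective in a piecewise-linear form. Feasibility of a strongly connected spanning subgraph $T = (V, F)$ depends only on the arc set~$F$, not on $w$, so only the ordering of costs among feasible subgraphs must be preserved. For a feasible~$T$, call a function $f \colon V \to V$ with $vf(v) \in F$ for all $v$ a \emph{witness} of~$T$, and associate with~$f$ the indicator $b_f \in \{0,1\}^m$ with $(b_f)_{vu} = 1$ iff $u = f(v)$. Then $||b_f||_1 = n$ and
\[
  \cost(V, F, w) \;=\; \sum_{v \in V} \max_{vu \in F} w(vu) \;=\; \max_{f \text{ witness of } T} \; w \cdot b_f.
\]

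The key step is to observe that the relative ordering of costs across feasible $T, T'$ is determined entirely by the signs of $w \cdot (b_f - b_{f'})$ for pairs of witnesses. Since $||b_f - b_{f'}||_1 \le 2n \le 4nm$ (using $n \le m$ because $G$ is strongly connected), Frank--Tardos with $N = 4nm + 1$ preserves all these signs. Preservation implies, for each fixed feasible~$T$, that the set of witnesses achieving the inner maximum is the same under $w$ and under $\hat w$; fix any such common maximiser $f^\star_T$. Then for any feasible $T, T'$ one has $\cost(V, F, w) \le \cost(V, F', w)$ iff $w \cdot b_{f^\star_T} \le w \cdot b_{f^\star_{T'}}$ iff $\hat w \cdot b_{f^\star_T} \le \hat w \cdot b_{f^\star_{T'}}$ iff $\cost(V, F, \hat w) \le \cost(V, F', \hat w)$. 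Consequently every optimal solution of $(G, w)$ is an optimal solution of $(G, \hat w)$, and conversely. The running time is polynomial, inherited from Frank--Tardos.

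The main obstacle is the tie-breaking in the $\max$ reformulation: the inner maximiser $f^\star_T$ is in general not unique, and a priori the maximising set could differ between $w$ and $\hat w$. Sign preservation dissolves this issue, because $w \cdot (b_f - b_{f'}) = 0$ forces $\hat w \cdot (b_f - b_{f'}) = 0$, so ties among witnesses persist and strictly dominated witnesses remain strictly dominated. The only bookkeeping is to verify that every relevant difference vector has $\ell_1$-norm at most $N - 1 = 4nm$, which holds because $||b_f - b_{f'}||_1 \le 2n \le 4nm$ throughout.
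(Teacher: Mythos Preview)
Your argument is correct and rests on the same underlying tool as the paper's proof, namely the Frank--Tardos weight-reduction theorem. The paper, however, does not invoke Frank--Tardos directly: it instead appeals to the $\alpha$-$\mathbb{Z}$-linearizability framework of \cite{BBFNN}, showing that the base function $f(e_i,\omega)=\omega_i$ is $1$-$\mathbb{Z}$-linearizable and then citing Lemma~4.8 and Theorem~4.7 of \cite{BBFNN} as black boxes to conclude that $\cost(V,F,w)$ is $2n$-$\mathbb{Z}$-linearizable and that the desired $\hat w$ exists. Your proof unpacks precisely what those black boxes do in this special case: you write the cost as $\max_f w\cdot b_f$ over witnesses, observe that all relevant comparison vectors $b_f-b_{f'}$ have $\ell_1$-norm at most $2n$, and verify by hand that sign preservation on such vectors fixes both the inner argmax (over witnesses of a given $T$) and the outer ordering (across feasible $T$). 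The trade-off is that the paper's route is shorter on the page but opaque without consulting \cite{BBFNN}, whereas your route is self-contained and makes the role of the $2n$ bound (and hence the choice $N=4nm+1$) transparent.
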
 

\begin{proof}
    We use the notion of~$\alpha$-$\Z$-linearizability, which uses
    
    \[ \Z_r := \{-r, -r+1, \ldots, r-1, r\} \]
    
    and is defined as follows:
    
    \begin{definition}[{\cite{BBFNN}}]
        A function~$f\colon L\times \mathds{Q}^d\to \mathds{Q}$ with~$L\subseteq \Sigma^*$ is \Zlin{$\alpha$}, 
        $\alpha\in \N$, 
        if for all~$\omega\in\mathds{Q}^d$ and for all~$x\in L$ it holds that 
        \begin{enumerate}
        \item there exists~$b_x\in\Z_\alpha^d$ 
            such that~$f(x,\omega)=b_x^\top \omega$ and\label{alphZlin:A}
        \item for all~$\omega'\in \{\omega'' \in \mathds{Q}^d \mid \forall \beta \in \Z_\alpha^d.\; \sign(\beta^T \omega) = \sign(\beta^T \omega'')\}$ it holds that
                $f(x,\omega)=b_x^\top \omega$ if and only if~$f(x,\omega')=b_x^\top \omega'$.\label{alphZlin:B}
        \end{enumerate}
    \end{definition}

    To this end, observe that we can rewrite the goal function to fit their notion as follows.
    Let~$F_v:=\{vu\in F\mid u \in N_G^+(v)\}$ and~$\mathcal{F}:=\{F_v\mid v\in V\}$.
    Then 
    \[ \cost(V, F, w) = \sum_{F_v\in \mathcal{F}} g(F_v,w),\quad\text{ with } g(F,w)=\max_{e\in F} w(e).\]
    Clearly, 
    with~$A=\{e_1,\ldots,e_m\}$ 
    the function~$f \colon E\times\Z^m\to \Z$, $f(e_i,\omega)\mapsto \omega_i := w(e_i)$ 
    is \Zlin{1}:
    On the one hand,
    we have that~$f(e_i,\omega)=\vec{e}_i^\top \omega$
    (where~$\vec{e}_i$ denotes the unit vector with the~$i$\textsuperscript{th} entry being one).
    On the other hand,
    for all~$\omega'\in \Z^m$
    it holds true that~$f(e_i,\omega)=\vec{e}_i^\top \omega$ if and only if~$f(e_i,\omega')=\vec{e}_i^\top \omega'$.
    
    By Lemma~4.8 in \cite{BBFNN}, it follows that~$\cost(V, F, w)$ is~\Zlin{$2n$}.
    Finally, Theorem~4.7 in \cite{BBFNN} yields the desired weight function~$\hat{w}$.
    \qed
\end{proof}

Combining \Cref{thm:VCKern,thm:weights} gives us the desired kernel.

\begin{corollary}
    \label{cor:expKernelVCq}
    \textsc{MinPAC} admits an exponential-size kernel with respect to the combined parameter vertex cover plus number of numbers.
\end{corollary}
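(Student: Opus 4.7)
The plan is to chain the two preceding theorems. First I would invoke \Cref{thm:VCKern} on the input instance $I = (G,w)$ with the provided vertex cover of size $x$ (which can be obtained up to a factor of $2$ from any maximal matching in linear time if not supplied). This yields, in polynomial time, an instance $I' = (G', w')$ together with an additive offset $d$ satisfying $\opt(I) = \opt(I') + d$ and $|V(G')| \le (q+1)^{2x} + x$. Because $G'$ is a simple digraph, the number of arcs satisfies $m' \le |V(G')| \cdot (|V(G')| - 1)$, so both $n'$ and $m'$ are bounded by a computable function of $q + x$ alone.

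Next I would apply \Cref{thm:weights} to $I'$ to obtain a new weight function $\hat{w}$ with
\[
\|\hat{w}\|_\infty < 2^{4(m')^3}\bigl(4 n' m' + 1\bigr)^{m'(m'+2)}
\]
and the guarantee that every optimal solution of $(G', w')$ is also an optimal solution of $(G', \hat{w})$ (and, since this also holds for the instance with roles reversed via the same construction, the optimal value itself is preserved once one re-derives it from any optimal subgraph). Substituting the bounds on $n'$ and $m'$ from the previous step, $\|\hat{w}\|_\infty$ is bounded by a function that depends only on $q + x$, so each individual weight has bit-length bounded in terms of $q + x$. Hence the total encoding size of $(G', \hat{w})$, together with $d$, is bounded by a (large but computable) function of $q + x$, which is the definition of a kernel of exponential size in the combined parameter.

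There is no genuine technical obstacle beyond bookkeeping; the only point that deserves a short sentence is that the correctness of the composed kernelization follows from combining $\opt(I) = \opt(I') + d$ (from \Cref{thm:VCKern}) with $\opt((G', w')) = \opt((G', \hat{w}))$ (from \Cref{thm:weights}), so that an optimal solution of $(G', \hat{w})$ can be lifted to an optimal solution of $I$ by taking the same arc set and adding $d$ to its cost. This establishes \Cref{cor:expKernelVCq}.
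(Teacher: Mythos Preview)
Your proposal is correct and takes essentially the same approach as the paper, which proves this corollary in a single sentence by combining \Cref{thm:VCKern} and \Cref{thm:weights}. You have simply spelled out the bookkeeping (bounding $m'$ in terms of $n'$, substituting into the weight bound, and verifying that the offset $d$ carries through) that the paper leaves implicit.
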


\section{Parameterization by feedback edge number}
\label{sec:fen}

In this section we describe a kernelization for \textsc{MinPAC} parameterized by the \emph{feedback edge number}.
The feedback edge number for an undirected graph is the minimum number of edges that have to be removed in order to make it a forest.
We define the feedback edge number for a directed graph~$G$ as the feedback edge number of its underlying undirected graph~$\uug{G}$.
Note that a minimum feedback edge set can be computed in linear time.
In \Cref{section:hardness}, we will show that the parameter \emph{feedback arc number}, which is the directed counterpart of the feedback edge number, does not allow the design of an FPT algorithm for \textsc{MinPAC} unless P = NP.

The feedback edge number measures how tree-like the input is.
From a theoretical perspective this is interesting to analyze because any instance~$(G, w)$ of \textsc{MinPAC} is easy to solve if~$\uug{G}$ is a tree.
In this case all edges of~$\uug{G}$ must correspond to arcs in both directions in~$G$ and the optimal solution is~$G$ itself.
The parameter is also motivated by real world applications in which the feedback edge number is small; for instance, sensor networks along waterways (including canals) are expected to have a small number of feedback edges.
In this section we first prove the following theorem which states that \textsc{MinPAC} admits a partial kernel with respect to feedback edge number.
Afterwards, we strengthen this result to a proper polynomial kernel (with a worse but still polynomial running time).

\begin{theorem} \label{theorem:kernelization}
  In linear time, one can transform any instance~$I=(G,w)$ of \textsc{MinPAC} with feedback edge number g into an instance~$I'=(G',w')$ and compute a value~$d \in \N$ such that~$G'$ has at most~$20g-20$ vertices, $42g-42$ arcs, and~$\opt(I)=\opt(I') + d$.
\end{theorem}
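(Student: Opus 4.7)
The plan is to follow the standard template for feedback-edge-number kernelizations: compute a minimum feedback edge set of $\uug{G}$ in linear time, and then apply two local reduction rules that remove undirected-degree-$1$ vertices and shorten every maximal undirected-degree-$2$ path to constant length.

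The counting backbone is a standard graph-theoretic lemma: if $\uug{G}$ is connected (which it is, since $G$ is strongly connected) and every vertex has undirected degree at least~$2$, then the number $n_3$ of vertices of undirected degree at least~$3$ satisfies $n_3 \leq 2(g-1)$, and the number $p$ of maximal paths whose internal vertices all have degree exactly~$2$ satisfies $p \leq 3(g-1)$. The first bound follows by combining the handshake identity $2|E(\uug{G})| \geq 2|V(\uug{G})| + n_3$ with $|E(\uug{G})| = |V(\uug{G})| + g - 1$; the second follows by contracting each maximal degree-$2$ path to a single edge and applying the first bound to the resulting multigraph, whose feedback edge number is still at most $g$.

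I would then design two reduction rules. The first handles an undirected-degree-$1$ vertex $v$ with unique neighbor $u$: strong connectivity forces both arcs $vu$ and $uv$ to be present, so in every feasible solution $v$ pays exactly $w(vu)$ and $u$ is forced to pay at least $w(uv)$. One deletes $v$, increases $d$ by $w(vu)$, and encodes the new lower bound on $u$'s cost by raising every outgoing weight of $u$ that is strictly below $w(uv)$ up to $w(uv)$. This keeps the cost function equivalent, since after the raise $u$ can no longer save anything by choosing one of those arcs. The second rule replaces every sufficiently long maximal degree-$2$ path by a constant-size gadget with at most $6$ internal vertices and at most $14$ arcs that reproduces the path's contribution to the optimum for every behaviour at the two branching endpoints. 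Combining the counting lemma with this gadget bound yields at most $n_3 + 6p \leq 2(g-1) + 18(g-1) = 20g-20$ vertices and at most $14p \leq 42g - 42$ arcs after exhaustive reduction.

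The main obstacle is the design and correctness of the path-shortening gadget. On a maximal degree-$2$ path $v_0, v_1, \ldots, v_t$, each internal vertex $v_i$ has at most two outgoing arcs (to $v_{i-1}$ and to $v_{i+1}$), and its cost in any feasible subgraph is the weight of whichever of these is used. An optimal choice on the path then reduces to picking a ``turning point'' that separates forward- from backward-pointing vertices, subject to whatever arcs at the two ends are forced by the rest of the graph. The total cost contribution becomes a piecewise-linear function of only a few parameters---essentially the minimum forward weight along the path, the minimum backward weight, and the weights of the four arcs incident to $v_0$ and $v_t$ within the path. The technical heart of the proof is to show that every non-dominated value of this function can be reproduced by a fixed-size gadget with at most $6$ internal vertices and $14$ arcs in a way that matches any boundary behaviour of the branching endpoints. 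Once this is verified, a single traversal of $\uug{G}$ locates all undirected-degree-$1$ vertices and all maximal degree-$2$ paths, and the reductions are carried out in linear time overall.
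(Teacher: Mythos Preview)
Your overall plan---delete degree-$1$ vertices, shorten long degree-$2$ paths, and count via the $2(g-1)$ and $3(g-1)$ bounds---is exactly the paper's, and your variant of the degree-$1$ rule (raising $u$'s small outgoing weights up to $w(uv)$ instead of subtracting $w(uv)$ from all of them) is a correct alternative.

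The gap is in your description of what happens on a long degree-$2$ path. An optimal solution restricted to a maximal degree-$2$ path $(v_0,\dots,v_{h+1})$ does \emph{not} look like a single turning point separating forward- from backward-pointing vertices, and the relevant invariants are not minimum forward/backward weights. The paper first applies an extra normalization rule you omit: subtract from every outgoing arc of each vertex its minimum outgoing weight, so that every $v_i$ has an outgoing arc of weight $0$ and hence $\max\{w(v_iv_{i-1}),w(v_iv_{i+1})\}=w(v_iv_{i-1})+w(v_iv_{i+1})$. After this, strong connectivity forces exactly four patterns on the inner arcs: all forward present (R), all backward present (L), both directions everywhere (B), or both directions everywhere except at one index $k$ where neither arc is present (N). The quantities the gadget must reproduce are therefore the \emph{sums} $C_R=\sum_i w(v_iv_{i+1})$, $C_L=\sum_i w(v_{i+1}v_i)$, $C_R+C_L$, and $C_N=C_R+C_L-\max_k\bigl(w(v_kv_{k+1})+w(v_{k+1}v_k)\bigr)$. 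The paper then gives an explicit $6$-vertex, $10$-arc gadget whose four relevant internal configurations realize exactly these four values; a case split on whether $\min\{C_R,C_L\}\le C_N$ is needed to set two of the gadget weights correctly. Your turning-point picture with minimum weights does not capture case~(N) and would not produce $C_N$, so as stated the gadget design would not go through.
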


\begin{corollary}
    \textsc{MinPAC} can be solved in~$O(2^{O(g)} + n + m)$ time.
\end{corollary}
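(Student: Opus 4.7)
The plan is to combine the linear-time kernelization from \Cref{theorem:kernelization} with a brute-force solver on the kernel. First, I would apply \Cref{theorem:kernelization} to the input instance $I=(G,w)$ in $O(n+m)$ time, obtaining an equivalent instance $I'=(G',w')$ with $|V(G')|\le 20g-20$ vertices and $|A(G')|\le 42g-42$ arcs together with an offset value $d\in\N$ satisfying $\opt(I)=\opt(I')+d$.

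Second, I would solve $I'$ exhaustively. A strongly connected spanning subgraph of $G'$ is completely determined by a subset $F\subseteq A(G')$, so it suffices to enumerate all $2^{|A(G')|}\le 2^{42g-42}$ candidate arc sets. For each candidate $F$ one tests strong connectivity of $(V(G'),F)$ in $O(|V(G')|+|F|)=O(g)$ time using Tarjan's algorithm, and, if it is strongly connected, computes $\cost(V(G'),F,w')$ in $O(g)$ time. Keeping track of the minimum yields $\opt(I')$ and a corresponding optimal solution $H'$ in $2^{O(g)}$ total time. Finally, one returns $\opt(I')+d$ (and, if desired, lifts $H'$ back to a solution of $I$ by undoing the reduction rules of \Cref{theorem:kernelization}; since the latter is performed in linear time, the lifting is likewise absorbed into $O(n+m)$).

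Summing the two phases gives running time $O(n+m)+2^{O(g)}=O(2^{O(g)}+n+m)$, as claimed. There is no real obstacle here beyond the kernelization itself (which is the content of \Cref{theorem:kernelization} and assumed); the only subtlety is to ensure that the brute-force phase is separated additively from the kernelization phase rather than multiplicatively, which is automatic because the kernel's size depends only on $g$ and not on $n$ or $m$.
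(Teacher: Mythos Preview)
The proposal is correct and matches the paper's intent. The corollary is stated in the paper immediately after \Cref{theorem:kernelization} without an explicit proof, and your argument---apply the linear-time kernelization, then brute-force over the at most $2^{42g-42}$ arc subsets of the kernel---is precisely the natural derivation the paper leaves implicit.
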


We will present a set of data reduction rules which shrink any instance of \textsc{MinPAC} to an essentially equivalent instance whose size is bounded as specified in \Cref{theorem:kernelization}.
We simultaneously compute the value~$d$, which we initialize with 0.

Our first reduction rule reduces the weights of arcs outgoing from a vertex by the weight of its cheapest outgoing arc.
This ensures that each vertex has at least one outgoing arc of weight zero.

\begin{reductionrule} \label{reduction:costs}
  Let~$v$ be a vertex with~$\delta_v := \min_{vu \in A(G)} w(vu) > 0$.
  Update the weights and~$d$ as follows:
  \begin{enumerate}[align=left,label={(\roman*)}]
    \setlength{\itemindent}{.5em}
    \setlength{\labelwidth}{1.5em}
    \item $w(vu) = w(vu) - \delta_v$ for each~$vu \in A(G)$.
    \item $d := d + \delta_v$.
  \end{enumerate}
\end{reductionrule}

{
\begin{lemma}
  \label{lemma:cost}
  \Cref{reduction:costs} is correct.
\end{lemma}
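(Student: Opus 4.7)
The plan is to argue that subtracting $\delta_v$ from the weight of every arc outgoing from $v$ shifts the cost of every feasible solution by exactly $\delta_v$, which matches the increment of $d$. The two instances $I=(G,w)$ and $I'=(G,w')$ share the same underlying graph $G$, so they admit exactly the same set of strongly connected spanning subgraphs; hence it suffices to compare objective values on a fixed subgraph.

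First, I would check that $w'$ is a valid weight function, i.e.\ $w'\colon A(G)\to\N$. By the choice of $\delta_v=\min_{vu\in A(G)} w(vu)$ we have $w(vu)\ge\delta_v$ for every outgoing arc $vu$, so $w'(vu)=w(vu)-\delta_v\ge 0$, and all other weights are unchanged. Next, I would fix an arbitrary strongly connected spanning subgraph $H$ of $G$ and compute $\cost(V,A(H),w)-\cost(V,A(H),w')$. For every vertex $u\neq v$, no outgoing weight was altered, so its contributions in $w$ and $w'$ coincide. For $v$ itself, observe that strong connectivity of $H$ forces $v$ to have at least one outgoing arc in $A(H)$, so the maximum is taken over a nonempty set, and
\[
  \max_{vu\in A(H)} w'(vu) \;=\; \max_{vu\in A(H)} \bigl(w(vu)-\delta_v\bigr) \;=\; \max_{vu\in A(H)} w(vu) \;-\; \delta_v,
\]
since subtracting the same constant from every element of a finite set shifts the maximum by that constant. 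Summing over all vertices yields $\cost(V,A(H),w)=\cost(V,A(H),w')+\delta_v$.

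Since this identity holds for every strongly connected spanning subgraph $H$, taking the minimum over all such $H$ on both sides gives $\opt(I)=\opt(I')+\delta_v$, which is exactly the increment applied to $d$. Consequently, if $\opt(I_{\text{before}})=\opt(I)+d_{\text{before}}$ for some reduction accumulator $d_{\text{before}}$ before applying the rule, then after applying it we obtain $\opt(I_{\text{before}})=\opt(I')+(d_{\text{before}}+\delta_v)=\opt(I')+d_{\text{after}}$, establishing correctness in the sense of the definition in the preliminaries.

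There is essentially no obstacle beyond the single observation that every vertex of a strongly connected spanning subgraph has at least one outgoing arc; the rest is a direct computation. The argument is phrased per application of the rule, so repeated exhaustive application remains correct by induction on the number of applications, with each step contributing its own local shift $\delta_v$ to $d$.
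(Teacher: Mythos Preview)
Your proof is correct and follows essentially the same approach as the paper: both argue that passing from $w$ to $w'$ shifts the cost of any strongly connected spanning subgraph by exactly $\delta_v$, and hence shifts the optimum by $\delta_v$. The paper phrases this as two inequalities (an optimal solution of $I$ witnesses $\opt(I')\le\opt(I)-\delta_v$ and vice versa), whereas you prove the pointwise identity $\cost(V,A(H),w)=\cost(V,A(H),w')+\delta_v$ for every feasible $H$ and then minimize; your extra checks that $w'$ is nonnegative and that $v$ has an outgoing arc in $H$ are details the paper leaves implicit.
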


\begin{proof}
  Let~$I=(G,w)$ be an instance of \textsc{MinPAC} and let~$v \in V(G)$ be a vertex with~$\delta_v = \min_{vu \in A(G)} w(vu) > 0$.
  Let~$I'$ be a instance with reduced weights using \Cref{reduction:costs}.
  We show that~$\opt(I) = \opt(I') + \delta_v$.

  Let~$H$ be an optimal solution of~$I$.
  Then, $H$ is also a solution of~$I'$, where the cost for~$v$ is decreased by~$\delta_v$ and the cost for every other vertex remains identical.
  Thus,~$\opt(I')$ is at most~$\opt(I) - \delta_v$ and we obtain~$\opt(I) \geq \opt(I') + \delta_v$.
  For the other direction, let~$H$ be an optimal solution for~$I'$.
  Then~$H$ is also a solution for~$I$, where the cost for~$v$ is increased by~$\delta_v$ and the cost for every other vertex remains identical.
  Thus,~$\opt(I)$ is at most~$\opt(I') + \delta_v$ and we obtain~$\opt(I) = \opt(I') + \delta_v$.
\end{proof}
}
Our next reduction rule discards all degree-one vertices.

\begin{reductionrule} \label{reduction:degree1}
  Let~$v$ be a vertex with~$\deg_G(v) = 1$ and let~$u$ be its neighbor.
  Update~$(G, w)$ and~$d$ as follows:
  \begin{enumerate}[align=left,label={(\roman*)}]
    \setlength{\itemindent}{.5em}
    \setlength{\labelwidth}{1.5em}
    \item $G := G[V(G) \setminus \{ v \}]$.
    \item $w(u v') := \max \{ 0, w(uv') - w(uv) \}$ for each~$uv' \in A(G) \setminus \{ uv \}$.
    \item $d := d + w(vu) + w(uv)$.
  \end{enumerate}
\end{reductionrule} 

{
\begin{lemma}
  \label{lemma:degree1correct}
  \Cref{reduction:degree1} is correct.
\end{lemma}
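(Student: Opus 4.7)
The plan is to prove $\opt(I) = \opt(I') + w(vu) + w(uv)$ via two inequalities, using the fact that a degree-one vertex forces a very rigid structure in any solution.

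First I would note the following structural observation: since $G$ is strongly connected and $\deg_G(v) = 1$, the vertex $u$ must be $v$'s only neighbor with both arcs $vu$ and $uv$ present in $G$, and consequently any strongly connected spanning subgraph $H$ of $G$ must contain both $vu$ and $uv$. In particular, $v$ contributes exactly $w(vu)$ to $\cost$, and $u$ contributes at least $w(uv)$.

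For the direction $\opt(I) \ge \opt(I') + d$, I would take an optimal solution $H$ of $I$ and let $H' := H - v$. Since $H$ is strongly connected and $v$ is a leaf attached to $u$ via the two arcs $uv, vu$, removing $v$ and its incident arcs preserves strong connectivity on $V(G) \setminus \{v\}$, so $H'$ is a feasible solution of $I'$. To compare costs, let $W := \max\{w(uv') \mid uv' \in A(H),\, v' \ne v\}$ (with $W := -\infty$ if no such arc exists). Under the original weights, $u$'s contribution is $\max(w(uv), W)$ and $v$'s contribution is $w(vu)$. Under the reduced weights of $I'$, $u$'s contribution drops to $\max(0, W - w(uv))$. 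A short case split on whether $W \le w(uv)$ or $W > w(uv)$ shows that in either case the saving at $u$ is exactly $w(uv)$, so together with the disappearance of $v$'s cost $w(vu)$ we get $\cost(H) - \cost(H') = w(vu) + w(uv) = d$, yielding the inequality.

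For the reverse direction, I would take an optimal solution $H'$ of $I'$ and define $H := H' + v + vu + uv$. By \Cref{obs:addv} (applied with the roles $v = u'' = v$), $H$ is strongly connected and hence feasible for $I$. Reversing the case analysis above, the cost increase at $u$ from $I'$ to $I$ is exactly $w(uv)$, and the additional vertex $v$ contributes $w(vu)$, so $\cost(H) = \cost(H') + w(vu) + w(uv)$, giving $\opt(I) \le \opt(I') + d$. Combining both bounds proves the lemma.

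There is essentially no hard step here; the only subtlety to be careful about is the $\max\{0, \cdot\}$ cap in the weight update, which is why the two-case analysis (comparing $W$ with $w(uv)$) is needed — this is where one might slip and double-count if one is sloppy. Everything else follows directly from strong connectivity of $H$ and $H'$ and from the fact that both $uv$ and $vu$ are forced into every feasible solution.
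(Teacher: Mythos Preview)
Your proof is correct and follows essentially the same route as the paper: show $\opt(I)\ge\opt(I')+d$ by deleting $v$ from an optimal $H$, and $\opt(I)\le\opt(I')+d$ by reattaching $v$ to an optimal $H'$. The paper simply asserts the identity $\max_{uv'\in A(H)} w(uv') - \max_{uv'\in A(H)\setminus\{uv\}} w'(uv') = w(uv)$, whereas you justify it via the clean two-case split on $W\lessgtr w(uv)$; your extra care with the $\max\{0,\cdot\}$ clamp is exactly the point where the paper is terse.
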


\begin{proof}
  Let~$I = (G, w)$ be an instance of \textsc{MinPAC} with an optimal solution~$H$.
  Let~$v \in V(G)$ be a vertex with~$\mathrm{deg}_G(v) = 1$ and~$u$ be its neighbor.
  (Since~$G$ is strongly connected, we have~$uv \in A(G)$ and~$vu \in A(G)$.)
  Let~$I' = (G', w')$ be the instance in which~$v$ is removed according to \Cref{reduction:degree1}.
  Then,~$H[V(G')]$ is a solution of~$I'$.
  Since the cost for~$u$ decreases by~$\max_{uv' \in A(H)} w(uv') - \max_{uv' \in A(H) \setminus \{ uv \}} w'(uv') = w(uv)$ and the costs for other vertices remain unchanged, we have~$\opt(I) \ge \opt(I') + w(vu) + w(uv)$.
  For the other direction, let~$H'$ be an optimal solution of~$I'$.
  Then, $H' + vu + uv$ is a solution of~$I$.
  The cost for~$v$ is~$w(vu)$ and the cost for~$u$ is~$\max_{uv' \in A(H') \cup \{ uv \}} w(uv') = w(uv) + \max_{uv' \in A(H')} w'(uv')$, while the costs for other vertices remain the same.
  Thus, we obtain~$\opt(I) \le \opt(I') + w(vu) + w(uv)$.
\end{proof}
}

\begin{lemma}
  \Cref{reduction:costs,reduction:degree1} can be exhaustively applied in linear time.
\end{lemma}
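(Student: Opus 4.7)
The plan is to avoid the naive blow-up from repeatedly rewriting weights. The obstacle is that a vertex $u$ can have \Cref{reduction:costs} triggered many times during the peeling process: after \Cref{reduction:degree1} removes a degree-1 neighbor $v$ of $u$ with $w(uv)=0$, vertex $u$ may lose its only zero-weight out-arc and then require another application of \Cref{reduction:costs}; iterating this can force up to $|N^+_G(u)|$ such rule applications at $u$, each of cost $\Theta(|N^+_G(u)|)$, yielding a quadratic bound overall.

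To sidestep this, I would keep the arc weights unchanged throughout the peeling and instead maintain a nonnegative offset $\sigma(u)$ per vertex (initially zero), with the invariant that the \emph{current} weight of an arc $uv$ equals $\max\{0,w(uv)-\sigma(u)\}$. Under this representation, \Cref{reduction:costs} at $u$ amounts to computing $\delta_u$ as the minimum current outgoing weight and adding $\delta_u$ to both $\sigma(u)$ and $d$. Similarly, a single application of \Cref{reduction:degree1} at a degree-1 vertex $v$ with neighbor $u$ reduces to: read the current weights $w_{\mathrm{cur}}(vu) = \max\{0,w(vu)-\sigma(v)\}$ and $w_{\mathrm{cur}}(uv) = \max\{0,w(uv)-\sigma(u)\}$, add their sum to $d$, increment $\sigma(u)$ by $w_{\mathrm{cur}}(uv)$ (which exactly mirrors the batched update $w(uv')\mapsto\max\{0,w(uv')-w_{\mathrm{cur}}(uv)\}$ on every remaining out-arc of $u$), and delete $v$ together with its two incident arcs while decrementing $u$'s degree. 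Each such application runs in $O(1)$ time.

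I would then run three phases. \emph{Phase~1} initializes a queue of all degree-1 vertices in $O(n)$ time and repeatedly pops vertices and applies \Cref{reduction:degree1} as above; whenever $u$'s degree drops to~1 it is pushed to the queue, and a popped vertex whose degree is no longer~1 is simply discarded. Since each vertex is pushed at most once and each iteration is $O(1)$, Phase~1 runs in $O(n)$ time and leaves no degree-1 vertex. \emph{Phase~2} iterates over each remaining vertex $u$ with $|N^+_G(u)| \ge 1$ and performs the final application of \Cref{reduction:costs}, taking $O(n+m)$ time in total; since \Cref{reduction:costs} does not modify the graph structure, no new degree-1 vertex appears. \emph{Phase~3} materializes the new weight function by setting $w'(uv) := \max\{0,w(uv)-\sigma(u)\}$ in $O(m)$ time.

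Correctness of exhaustive application follows from \Cref{lemma:cost,lemma:degree1correct}, which hold irrespective of the order of rule applications; the offset trick is simply a lazy implementation of the corresponding sequence of weight updates. Summing the phase costs gives the claimed $O(n+m)$ running time.
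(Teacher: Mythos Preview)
Your proposal is correct and follows essentially the same strategy as the paper: a per-vertex offset (your $\sigma$, the paper's $\ell$) to represent all pending weight reductions lazily, combined with a queue-based peeling of degree-$1$ vertices with $O(1)$ work each, followed by a single $O(m)$ pass to materialize the final weights. The only cosmetic difference is ordering: the paper initializes the offset with the minimum out-weight \emph{before} peeling and adds the remaining offsets to $d$ \emph{after}, whereas you start offsets at zero and apply \Cref{reduction:costs} only in a final Phase~2; both yield the same $d$ and the same reduced instance.
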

\begin{proof}
  For each vertex~$v \in V(G)$, set~$\ell(v) := \min_{vu \in A(G)} w(vu)$.
  Let~$L$ be a list of degree-1 vertices.
  We apply the following procedure as long as~$L$ is nonempty.
  Let~$v$ be the vertex taken from~$L$ and let~$u$ be its neighbor.
  Remove~$v$ and its incident arcs from~$G$, set~$\ell(u) := \max \{ \ell(u), w(uv) \}$, and update~$d := d + \max \{ w(vu), \ell(v) \}$.
  If the degree of~$u$ becomes~$1$ after deleting~$v$, then add~$u$ to~$L$.
  Once~$L$ is empty, update the weight of each remaining arc~$w(vu) := \max \{ 0, w(vu) - \ell(v) \}$.
  Finally, update~$d := d + \ell(v)$ for each remaining vertex~$v$.
  It is easy to see that the algorithm runs in linear time.
\end{proof}

Henceforth, we can assume that \Cref{reduction:costs,reduction:degree1} are exhaustively applied.
Thus, the underlying undirected graph~$\uug{G}$ will have no degree-one vertices. 
It remains to bound the number of vertices that have degree two in~$\uug{G}$.
Once this is achieved, we can use standard arguments to upper-bound the size of the instance~\cite{BBNN17}.

The rough idea to bound the number of degree-two vertices is as follows:
In order to upper-bound the number of degree-two vertices in~$\uug{G}$, we consider long paths in~$\uug{G}$.
A path~$P = (v_0, \dots, v_{h+1})$ in~$\uug{G}$ is a \emph{maximal induced path} of~$G$ if~$\deg_G(v_0) > 2$, $\deg_G(v_{h+1}) > 2$, and~$\deg_G(v_i) = 2$ for all~$i \in [h]$.
We call the vertices~$\{ v_i \mid i \in [h] \}$ the inner vertices of~$P$.
We will replace the inner vertices of each maximal induced path on at least seven vertices with a fixed gadget. %
The arc-weights in the gadget are chosen such that the four possible ways in which the outermost inner vertices are connected inside the path (see \Cref{fig:cases} for a visualization of the four cases) are preserved.

\begin{figure}[t]
  \tikzset{svertex/.style = {shape=circle,draw,inner sep = 2pt}}
  \tikzset{dotvertex/.style = {shape=circle,inner sep = 2pt}}
  \captionsetup[subfigure]{labelformat=empty}
  \centering
  \begin{subfigure}{.5\textwidth}
    \centering
    \begin{tikzpicture}[scale=0.4]
      \node[svertex] (v1) at (0, 0) {};
      \node at (0,1) {$v_1$};
      \node[svertex] (v2) at (2, 0) {};
      \node[svertex] (v3) at (4, 0) {};
      \node[dotvertex] (dots) at (6,0) {...};
      \node[svertex] (vh_3) at (8, 0) {};
      \node[svertex] (vh_2) at (10, 0) {};
      \node[svertex] (vh_1) at (12, 0) {};
      \node at (12, 1) {$v_{h}$};
      \path[->] (v1) edge [bend left] (v2);
      \path[->] (v2) edge [bend left] (v3);
      \path[->] (v3) edge [bend left] (dots);
      \path[->] (dots) edge [bend left] (vh_3);
      \path[->] (vh_3) edge [bend left] (vh_2);
      \path[->] (vh_2) edge [bend left] (vh_1);
      \path[->, densely dotted] (v2) edge [bend left] (v1);
      \path[->, densely dotted] (v3) edge [bend left] (v2);
      \path[->, densely dotted] (dots) edge [bend left] (v3);
      \path[->, densely dotted] (vh_3) edge [bend left] (dots);
      \path[->, densely dotted] (vh_2) edge [bend left] (vh_3);
      \path[->, densely dotted] (vh_1) edge [bend left] (vh_2);
     \end{tikzpicture}
     \caption{(R)}
  \end{subfigure}%
  \begin{subfigure}{.5\textwidth}
    \centering
    \begin{tikzpicture}[scale=0.4]
      \node[svertex] (v1) at (0, 0) {};
      \node at (0,1) {$v_1$};
      \node[svertex] (v2) at (2, 0) {};
      \node[svertex] (v3) at (4, 0) {};
      \node[dotvertex] (dots) at (6,0) {...};
      \node[svertex] (vh_3) at (8, 0) {};
      \node[svertex] (vh_2) at (10, 0) {};
      \node[svertex] (vh_1) at (12, 0) {};
      \node at (12, 1) {$v_{h}$};
      \path[->] (v2) edge [bend left] (v1);
      \path[->] (v3) edge [bend left] (v2);
      \path[->] (dots) edge [bend left] (v3);
      \path[->] (vh_3) edge [bend left] (dots);
      \path[->] (vh_2) edge [bend left] (vh_3);
      \path[->] (vh_1) edge [bend left] (vh_2);
      \path[->, densely dotted] (v1) edge [bend left] (v2);
      \path[->, densely dotted] (v2) edge [bend left] (v3);
      \path[->, densely dotted] (v3) edge [bend left] (dots);
      \path[->, densely dotted] (dots) edge [bend left] (vh_3);
      \path[->, densely dotted] (vh_3) edge [bend left] (vh_2);
      \path[->, densely dotted] (vh_2) edge [bend left] (vh_1);
     \end{tikzpicture}
     \caption{(L)}
  \end{subfigure}%
  \\
  \begin{subfigure}{.5\textwidth}
    \centering
    \begin{tikzpicture}[scale=0.4]
      \node[svertex] (v1) at (0, 0) {};
      \node at (0,1) {$v_1$};
      \node[svertex] (v2) at (2, 0) {};
      \node[svertex] (v3) at (4, 0) {};
      \node[dotvertex] (dots) at (6,0) {...};
      \node[svertex] (vh_3) at (8, 0) {};
      \node[svertex] (vh_2) at (10, 0) {};
      \node[svertex] (vh_1) at (12, 0) {};
      \node at (12, 1) {$v_{h}$};
      \path[->] (v1) edge [bend left] (v2);
      \path[->] (v2) edge [bend left] (v3);
      \path[->] (v3) edge [bend left] (dots);
      \path[->] (dots) edge [bend left] (vh_3);
      \path[->] (vh_3) edge [bend left] (vh_2);
      \path[->] (vh_2) edge [bend left] (vh_1);
      \path[->] (v2) edge [bend left] (v1);
      \path[->] (v3) edge [bend left] (v2);
      \path[->] (dots) edge [bend left] (v3);
      \path[->] (vh_3) edge [bend left] (dots);
      \path[->] (vh_2) edge [bend left] (vh_3);
      \path[->] (vh_1) edge [bend left] (vh_2);
     \end{tikzpicture}
     \caption{(B)}
  \end{subfigure}%
  \begin{subfigure}{.5\textwidth}
    \centering
    \begin{tikzpicture}[scale=0.4]
      \node[svertex] (v1) at (0, 0) {};
      \node[svertex] (v2) at (1.8, 0) {};
      \node[dotvertex] (v3) at (3.6, 0) {...};
      \node[svertex] (vk) at (5.4, 0) {};
      \node[svertex] (vk1) at (6.6,0) {};
      \node[dotvertex] (vh_3) at (8.4, 0) {...};
      \node[svertex] (vh_2) at (10.2, 0) {};
      \node[svertex] (vh_1) at (12, 0) {};

      \node at (0, 1) {$v_1$};
      \node at (5.4, 1) {$v_{k}$};
      \node at (6.8, 1) {$v_{k+1}$};
      \node at (12, 1) {$v_{h}$};

      \path[->] (v1) edge [bend left] (v2);
      \path[->] (v2) edge [bend left] (v1);
      \path[->] (v2) edge [bend left] (v3);
      \path[->] (v3) edge [bend left] (v2);
      \path[->] (v3) edge [bend left] (vk);
      \path[->] (vk) edge [bend left] (v3);
      \path[->] (vk1) edge [bend left] (vh_3);
      \path[->] (vh_3) edge [bend left] (vk1);
      \path[->] (vh_3) edge [bend left] (vh_2);
      \path[->] (vh_2) edge [bend left] (vh_3);
      \path[->] (vh_2) edge [bend left] (vh_1);
      \path[->] (vh_1) edge [bend left] (vh_2);
     \end{tikzpicture}
     \caption{(N)}
  \end{subfigure}%
  \caption{Visualization of the four cases for connectivity inside maximal induced paths (see \Cref{obs:cases}).
  For cases (R) and (L), at least one of the dotted arcs is not present.
  }
  \label{fig:cases}
\end{figure}
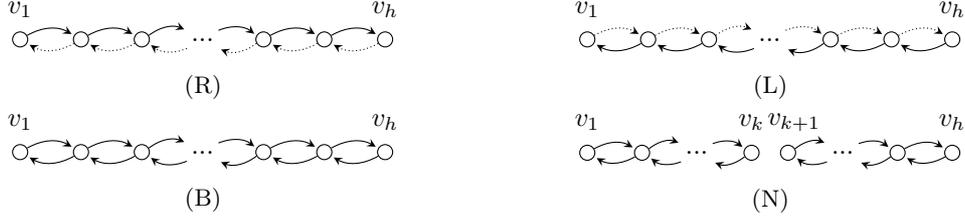

{

\begin{observation}
  \label{obs:cases}
  Let~$I=(G,w)$ be an instance of \textsc{MinPAC} with an optimal solution~$H$.
  Let~$P = (v_0, \dots, v_{h+1})$ be a maximal induced path of~$G$.
  Then, there are four cases in which~$v_1$ and~$v_h$ are connected inside~$P$ in~$H$ (\Cref{fig:cases}):
  \begin{description}
    \item[Case (R)] It holds for any~$i \in [h - 1]$ that~$v_i v_{i+1} \in A(H)$ and there exists~$k \in [h - 1]$ such that~$v_{k+1} v_k \not\in A(H)$.
    \item[Case (L)] It holds for any~$i \in [h - 1]$ that~$v_{i+1} v_i \in A(H)$ and there exists~$k \in [h - 1]$ such that~$v_k v_{k+1} \not \in A(H)$.
    \item[Case (B)] It holds for any~$i \in [h - 1]$ that~$v_i v_{i+1} \in A(H)$ and~$v_{i+1} v_i \in A(H)$.
    \item[Case (N)] There exists~$k \in [h - 1]$ such that~$v_k v_{k+1} \not\in A(H)$ and~$v_{k+1} v_k \not\in A(H)$, and thus it holds for any~$i \in \{ 1, \dots, k - 1, k + 1, \dots, h - 1 \}$ that~$v_i v_{i+1} \in A(H)$ and~$v_{i+1} v_i \in A(H)$.
  \end{description}
\end{observation}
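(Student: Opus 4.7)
The plan is to analyze the possible configurations of arcs of $H$ between consecutive inner vertices of $P$ and use strong connectivity to rule out all but the four named configurations. The key structural fact is that every inner vertex $v_j$ has degree two in $G$, with only neighbors $v_{j-1}$ and $v_{j+1}$, so any arc of $H$ incident to $v_j$ lies among the four candidates $v_j v_{j\pm 1}$ and $v_{j\pm 1} v_j$. For $i \in [h-1]$ I write $R_i := v_i v_{i+1}$ and $L_i := v_{i+1} v_i$, and classify each position by the status of $(R_i, L_i)$ in $A(H)$ as BB, BO, OB, or OO (both present, only $R_i$, only $L_i$, neither). The four listed cases correspond exactly to: all positions BB (case (B)); some BO and no OB or OO (case (R)); some OB and no BO or OO (case (L)); exactly one OO with all other positions BB (case (N)). It therefore suffices to rule out every other combination.

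The central tool is a segment argument. For $0 \le k < k' \le h$, set $S_{k,k'} := \{v_{k+1}, \dots, v_{k'}\}$. Since inner vertices have no other neighbors in $G$, the only arcs of $G$ with exactly one endpoint in $S_{k,k'}$ are the four boundary arcs $R_k, L_k, R_{k'}, L_{k'}$, among which $R_k$ and $L_{k'}$ enter $S_{k,k'}$ while $L_k$ and $R_{k'}$ leave it. Thus if all entering (respectively leaving) boundary arcs are absent from $A(H)$, then the vertex $v_0 \notin S_{k,k'}$ cannot reach (respectively be reached from) $S_{k,k'}$, contradicting strong connectivity of $H$.

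With this tool, exhaustiveness follows by a short case analysis. Two OO positions $k < k'$ produce a segment with all four boundary arcs absent, a contradiction, so there is at most one OO position. An OO position $k$ together with a BO or OB position $i \ne k$ yields, using the segment between $k$ and $i$, a cut with no entering or no leaving arcs, again a contradiction; hence if an OO position exists all other positions are BB, which is case (N). Finally, in the case with no OO position, if both a BO position and an OB position exist, the segment delimited by these two positions has either both leaving boundary arcs absent or both entering boundary arcs absent, depending on the order of the two positions, and strong connectivity again fails. The main obstacle is the bookkeeping of which of the four boundary arcs enters and which leaves the segment under each combination of types at the two distinguished positions; once that is laid out, every sub-case reduces to a short verification.
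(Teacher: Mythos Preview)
Your proposal is correct and uses the same core idea as the paper: a cut/segment argument exploiting that inner vertices have only two neighbors, so strong connectivity of $H$ forces the stated four configurations. The paper packages this more compactly---rather than classifying each position as BB/BO/OB/OO and doing a case split, it simply picks any $k$ with $v_kv_{k+1}\notin A(H)$ and any $k'$ with $v_{k'+1}v_{k'}\notin A(H)$ and shows $k=k'$ via exactly your segment argument, which immediately yields case~(N); but the underlying reasoning is identical.
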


\begin{proof}
  If~$v_i v_{i+1} \in A(H)$ holds for all~$i \in [h - 1]$ or~$v_{i+1} v_i \in A(H)$ holds for all~$i \in [h - 1]$, then we have one of the three cases (R), (L), or (B).
  Otherwise there exist~$k, k' \in [h - 1]$ such that~$v_k v_{k + 1} \not\in A(H)$ and~$v_{k'+1} v_{k'} \not\in A(H)$.
  We show that this corresponds to the case (N).
  To this end, we show that~$k = k'$.
  If~$k < k'$, then there is no outgoing arc from any vertices of~$S_1 = \{ v_{k+1}, \dots v_{k'} \}$ to~$V(G) \setminus S_1$.
  This is contradicting the assumption that~$H$ is a solution, and hence~$k \ge k'$.
  If~$k > k'$, then there is no incoming arc to any vertices of~$S_2 = \{ v_{k'+1}, \dots, v_k \}$ from~$V(G) \setminus S_2$.
  Again this is a contradiction, and hence we obtain~$k \le k'$.
\end{proof}

Before giving the gadget to replace the inner vertices of a maximal induced path, we define the cost for the inner vertices in cases (R), (L), and (N).

\begin{definition} \label{def:repGadgetWeights}
  Let~$(G, w)$ be an instance of \textsc{MinPAC} and let~$P = ( v_0, \dots, v_{h+1} )$ be a maximal induced path of~$G$.
  We define the cost for the connection inside~$P$ in the right direction, the left direction, and neither direction as follows:
  \begin{align*}
    C_{R} :=\sum_{i = 1}^{h - 1} w(v_{i} v_{i+1}),\;\;\;\; C_{L} := \sum_{i=1}^{h-1} w(v_{i+1}v_{i}),\\ C_{N} := \sum_{i = 1, \, i \ne k}^{h-1}{w(v_{i} v_{i+1}) + w(v_{i+1} v_{i})},
  \end{align*}
  where
  \[
    k := \mathop{\mathrm{argmax}}_{i \in [h - 1]} ( w(v_i v_{i + 1}) + w(v_{i + 1} v_i) ). 
  \]
\end{definition}

Note that~$C_R = \infty$ (or~$C_L = \infty$) if~$v_i v_{i+1} \not\in A(G)$ (or~$v_{i+1} v_i \not\in A(G)$) for some~$i$ (recall that~$w(vu) = \infty$ for~$vu \not\in A(G)$).
We finally present the gadget to replace the inner vertices of a maximal induced path.
The gadget is somewhat more involved than the gadget used in the symmetric version of \textsc{MinPAC} \cite{BBNN17} because it needs to encode the four cases seen in \Cref{obs:cases}.

\begin{definition} \label{def:repGadget}
  Let~$P=( v_0, \dots, v_{h+1} )$ be a maximal induced path.
  The path-gadget for~$P$ is a graph on 6 vertices~$\{ v_1, v_h, a_1, a_2, b_1, b_2 \}$ and 10 arcs~$\{v_1 a_1,\allowbreak a_1 v_1,\allowbreak v_h a_2,\allowbreak a_2 v_h,\allowbreak a_1 b_1,\allowbreak a_2 b_2,\allowbreak b_1 a_2,\allowbreak b_2 a_1,\allowbreak a_1 b_2,\allowbreak a_2 b_1\}$ with weights defined as follows:
  \begin{align*}
    w(v_1 a_1) &:= 0,     &w(a_1 v_1) &:= 0,     &w(v_{h}a_2) &:= 0, &w(a_2v_{h}) &:= 0, \\
    w(a_1 b_1) &:= C_{R}, &w(a_2 b_2) &:= C_{L}, &w(b_1a_2) &:= 0,     &w(b_2a_1) &:= 0,
  \end{align*}
  \begin{align*}
    w(a_1 b_2) &:=
    \begin{cases}
      C_R & \text{if } \, C_R \leq C_N \text{ or } C_L \leq C_N, \\
      \left\lceil\displaystyle\frac{1}{2} C_N \right\rceil & \textrm{otherwise},
    \end{cases} 
    \\
    w(a_2 b_1) &:=
    \begin{cases}
      C_L & \text{if }  C_R \leq C_N \text{ or } C_L \leq C_N,\\
      \left\lfloor\displaystyle\frac{1}{2} C_N \right\rfloor & \textrm{otherwise}. 
    \end{cases}
  \end{align*}
\end{definition}

\begin{observation}
  \label{obs:cost}
  In \Cref{def:repGadget}, it always holds that~$w(a_1 b_2) \le C_R$, $w(a_2 b_1) \le C_L$, and~$w(a_1 b_2) + w(a_2 b_1) \ge C_N$.
\end{observation}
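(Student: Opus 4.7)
The plan is a case distinction on which branch of the piecewise definitions of $w(a_1 b_2)$ and $w(a_2 b_1)$ in \Cref{def:repGadget} applies, verifying each of the three inequalities in both branches.

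For the upper bounds $w(a_1 b_2) \le C_R$ and $w(a_2 b_1) \le C_L$: in the first branch the weights equal $C_R$ and $C_L$ by definition, so the bounds hold with equality. In the \emph{otherwise} branch the branching condition yields $C_R > C_N$ and $C_L > C_N$, so combining this with the elementary facts $\lceil C_N / 2 \rceil \le C_N$ and $\lfloor C_N / 2 \rfloor \le C_N$ (valid for any nonnegative integer $C_N$) immediately gives both bounds.

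For the lower bound $w(a_1 b_2) + w(a_2 b_1) \ge C_N$: in the \emph{otherwise} branch the sum equals $\lceil C_N / 2 \rceil + \lfloor C_N / 2 \rfloor = C_N$, so the inequality is tight. In the first branch the sum is $C_R + C_L$; unrolling the definitions from \Cref{def:repGadgetWeights} and isolating the pair indexed by the argmax $k$ yields
\[
C_R + C_L = \sum_{i=1}^{h-1} \bigl( w(v_i v_{i+1}) + w(v_{i+1} v_i) \bigr) = C_N + w(v_k v_{k+1}) + w(v_{k+1} v_k) \ge C_N,
\]
since all arc weights are nonnegative.

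The only mild subtlety I anticipate is the convention $w(vu) = \infty$ for missing arcs, which could make $C_R$, $C_L$, or $C_N$ infinite. This is handled by observing that if $C_N = \infty$ then both disjuncts of the branching condition hold vacuously, placing us in the first branch, and some summand of $C_N$ with index $i \ne k$ must then be infinite, forcing at least one of $C_R, C_L$ to be infinite as well; hence each inequality either has $\infty$ on its larger side or reduces to $\infty \ge \infty$. When $C_N$ is finite all arithmetic takes place in $\mathds{N}$ and the argument above goes through unchanged.
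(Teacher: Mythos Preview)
The paper states this result as an observation without supplying a proof, so there is nothing to compare against; your case distinction on the two branches of the piecewise definition is exactly the natural verification and is correct, including the handling of the $\infty$ convention.
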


\begin{reductionrule} \label{reduction:deg2paths}
  Let~$P = (v_0, \ldots, v_{h+1})$ be a maximal induced path of~$G$ with~$h \ge 7$.
  Then, remove the vertices~$v_2, \ldots, v_{h-1}$ of~$P$, add a path-gadget for~$P$ with endpoints~$v_1$ and~$v_h$ (see \Cref{fig:repp}), and keep~$d$ unchanged.
\end{reductionrule}

\begin{figure}[t]
    \tikzset{svertex/.style = {shape=circle,draw,inner sep = 2pt}}
    \tikzset{dotvertex/.style = {shape=circle,inner sep = 2pt}}
    \centering
    \begin{tikzpicture}[scale=0.45]
        \node[svertex] (x0) at (-13.5, 0) {};
        \node[svertex] (x1) at (-11.5, 0) {};
        \node[svertex] (x2) at (-9.5, 0) {};
        \node[dotvertex] (dots) at (-8,0) {...};
        \node[svertex] (xhm1) at (-6.5, 0) {};
        \node[svertex] (xh) at (-4.5, 0) {};
        \node[svertex] (xh1) at (-2.5, 0) {};

        \node at (-13.5, 0.8) {$v_0$};
        \node at (-11.5, 0.8) {$v_1$};
        \node at (-9.5, 0.8) {$v_2$};
        \node at (-6.5, -0.8) {$v_{h-1}$};
        \node at (-4.5, -0.8) {$v_{h}$};
        \node at (-2.5, -0.8) {$v_{h+1}$};

        \path[->] (x0) edge [bend left] node {} (x1);
        \path[->] (x1) edge [bend left] node {} (x0);
        \path[->] (x1) edge [bend left] node {} (x2);
        \path[->] (x2) edge [bend left] node {} (x1);
        \path[->] (x2) edge [bend left] node {} (dots);
        \path[->] (dots) edge [bend left] node {} (x2);
        \path[->] (xhm1) edge [bend left] node {} (dots);
        \path[->] (dots) edge [bend left] node {} (xhm1);
        \path[->] (xhm1) edge [bend left] node {} (xh);
        \path[->] (xh) edge [bend left] node {} (xhm1);
        \path[->] (xh) edge [bend left] node {} (xh1);
        \path[->] (xh1) edge [bend left] node {} (xh);

        \node[svertex] (v0) at (0, 0) {};
        \node[svertex] (v1) at (2, 0) {};
        \node[svertex] (a1) at (4, 0) {};
        \node[svertex] (b1) at (6, 1.8) {};
        \node[svertex] (b2) at (6, -1.8) {};
        \node[svertex] (a2) at (8, 0) {};
        \node[svertex] (vh) at (10, 0) {};
        \node[svertex] (vh1) at (12, 0) {};

        \node at (0, 0.8) {$v_0$};
        \node at (2, 0.8) {$v_1$};
        \node at (4, 0.8) {${a_1}$};
        \node at (6, 2.6) {${b_1}$};
        \node at (6, -2.6) {${b_2}$};
        \node at (8, -0.8) {${a_2}$};
        \node at (10, -0.8) {$v_{h}$};
        \node at (12, -0.8) {$v_{h+1}$};

        \path[->] (v0) edge [bend left] node {} (v1);
        \path[->] (v1) edge [bend left] node {} (v0);
        \path[->, very thick] (v1) edge [bend left] node {} (a1);
        \path[->, very thick] (a1) edge [bend left] node {} (v1);
        \path[->] (a1) edge node {} (b1);
        \path[->] (a1) edge [bend left] node {} (b2);
        \path[->, very thick] (b2) edge [bend left] node {} (a1);
        \path[->] (a2) edge [bend left] node {} (b1);
        \path[->, very thick] (b1) edge [bend left] node {} (a2);
        \path[->] (a2) edge node {} (b2);
        \path[->, very thick] (a2) edge [bend left] node {} (vh);
        \path[->, very thick] (vh) edge [bend left] node {} (a2);
        \path[->] (vh) edge [bend left] node {} (vh1);
        \path[->] (vh1) edge [bend left] node {} (vh);
    \end{tikzpicture}
    \caption{
        Illustration of \Cref{reduction:deg2paths}. 
        We replace the inner vertices of a maximal induced path with a path-gadget. 
        Bold arcs denote arcs of weight 0. 
        For the weights of other arcs in the path-gadget, see \Cref{def:repGadget}.
        The value~$d$ remains unchanged, that is, the cost in both instances are the same.}
    \label{fig:repp}
\end{figure}
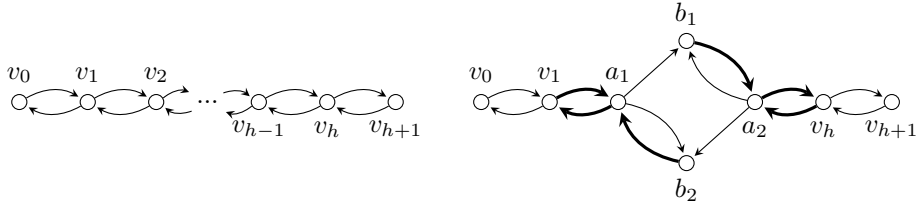

\begin{lemma}
  \label{lemma:deg2paths}
  \Cref{reduction:deg2paths} is correct and can be exhaustively applied in linear time.
\end{lemma}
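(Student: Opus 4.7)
The plan is to establish correctness $\opt(I)=\opt(I')$ (since $d$ is unchanged by \Cref{reduction:deg2paths}) via a four-way case analysis: on each replaced maximal induced path $P=(v_0,\ldots,v_{h+1})$, I would match the minimum cost of realizing each case from \Cref{obs:cases} on $P$ to the minimum cost of realizing an analogous configuration inside the path-gadget. The key preprocessing fact---provided by the prior exhaustive application of \Cref{reduction:costs,reduction:degree1}---is that every inner vertex $v_i$ with $1\le i\le h$ has only the two neighbors $v_{i-1},v_{i+1}$ and $\min\{w(v_iv_{i-1}),w(v_iv_{i+1})\}=0$. Consequently, whenever both arcs $v_iv_{i-1}$ and $v_iv_{i+1}$ lie in a solution, $\max\{w(v_iv_{i-1}),w(v_iv_{i+1})\}$ equals $w(v_iv_{i-1})+w(v_iv_{i+1})$, so the path-internal cost of cases (R), (L), (B), (N) collapses exactly to $C_R,C_L,C_R+C_L,C_N$.

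For $\opt(I')\le\opt(I)$, I would start with an optimal $H$ for $I$ and, for each replaced path $P$, swap $H\cap A(P)$ for the natural gadget analogue of its case: case (R) uses $\{v_1a_1,a_1v_1,a_1b_1,a_1b_2,b_1a_2,b_2a_1,a_2v_h,v_ha_2\}$ with only $a_1$ paying $C_R$; case (L) is symmetric with cost $C_L$; case (B) additionally includes $a_2b_2$ for a total of $C_R+C_L$; case (N) uses only the diagonal arcs $\{a_1b_2,a_2b_1\}$ (plus the zero-weight return arcs) with cost $w(a_1b_2)+w(a_2b_1)\le C_N$ by \Cref{def:repGadget,obs:cost}. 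Strong connectivity of the resulting $H'$ is preserved because the gadget realizes each of $v_1\leadsto v_h$ and $v_h\leadsto v_1$ precisely when $H$ does inside $P$, while the outer graph is unchanged.

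For $\opt(I)\le\opt(I')$, I would start with an optimal $H'$ for $I'$ and classify the gadget portion of $H'$ by whether each of $v_1\leadsto v_h$ and $v_h\leadsto v_1$ is realized inside the gadget: the only internal $v_1$-to-$v_h$ path traverses $a_1b_1$ (forcing $a_1$ to pay at least $C_R$); the only $v_h$-to-$v_1$ path traverses $a_2b_2$ (forcing $a_2$ to pay at least $C_L$); and if neither is used, a purely-diagonal configuration costs at least $w(a_1b_2)+w(a_2b_1)\ge C_N$ by \Cref{obs:cost}. Translating each gadget regime back to the corresponding case on $P$ yields an $H$ for $I$ with matching cost. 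For the linear-time claim: after \Cref{reduction:degree1} is exhausted, the maximal induced paths of $\uug{G}$ are exactly the disjoint maximal degree-$2$ chains and can all be enumerated in $O(n+m)$ by a single DFS; each replacement then costs $O(h)$ (one sweep to compute $C_R,C_L,C_N$ followed by a constant-time splice), and since the chains are vertex-disjoint the total work is $O(n+m)$.

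The main obstacle is the lower-bound argument in the converse direction---ruling out that some exotic combination of gadget arcs beats all four canonical configurations. This is precisely where the piecewise definition of $w(a_1b_2)$ and $w(a_2b_1)$ in \Cref{def:repGadget} earns its keep: the split $(\lceil C_N/2\rceil,\lfloor C_N/2\rfloor)$ used when $\min(C_R,C_L)>C_N$ guarantees $w(a_1b_2)+w(a_2b_1)\ge C_N$ while simultaneously keeping $w(a_1b_2)\le C_R$ and $w(a_2b_1)\le C_L$ (\Cref{obs:cost}), which is exactly what is needed to realize case (N) at cost $C_N$ without the diagonals undercutting the cost of cases (R) or (L).
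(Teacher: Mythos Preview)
Your overall strategy mirrors the paper's: a two-direction argument with a four-way case split via \Cref{obs:cases} in one direction and via gadget connectivity in the other, leaning on the zero-minimum-outgoing-weight guarantee from \Cref{reduction:costs} to collapse $\max$ to a sum. The running-time argument is also the same.

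There is one genuine gap in your forward direction, in case~(N). You claim the diagonal configuration $\{a_1b_2,a_2b_1\}$ has cost $w(a_1b_2)+w(a_2b_1)\le C_N$, citing \Cref{obs:cost}. But \Cref{obs:cost} asserts the \emph{reverse} inequality $w(a_1b_2)+w(a_2b_1)\ge C_N$; equality holds only in the ``otherwise'' branch of \Cref{def:repGadget}, namely when $C_R>C_N$ and $C_L>C_N$. When instead $C_R\le C_N$ or $C_L\le C_N$, the diagonals carry weights $C_R$ and $C_L$, and since $C_R+C_L=C_N+\max_i\bigl(w(v_iv_{i+1})+w(v_{i+1}v_i)\bigr)$ this sum can strictly exceed $C_N$. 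So your proposed gadget solution for case~(N) can overshoot. The paper repairs this with a further subcase split inside case~(N): if $C_R\le C_N$ it uses the (R)-configuration $\{a_1b_1,a_1b_2\}$ at cost $C_R\le C_N$; if $C_L\le C_N$ it uses the (L)-configuration at cost $C_L\le C_N$; only in the remaining regime does it use the diagonals, where they cost exactly $C_N$. Your final paragraph shows you already understand why the split weights work in that last regime; you just need to route the other two subcases through the cheaper unidirectional configurations rather than through the diagonals.
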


\begin{proof}
  Let~$I = (G, w)$ be an instance of \textsc{MinPAC} and~$P = ( v_0, \dots, v_{h+1} )$ be a maximal induced path of~$G$ with~$h \ge 7$.
  Let~$I' = (G', w')$ be the instance where the inner vertices of~$P$ are replaced by the path-gadget for~$P$.
  We use~$V_\text{in}$ to denote the inner vertices~$\{ v_1, \dots v_{h} \}$ of~$P$ and~$V_\text{new}$ to denote~$\{ v_1, v_h, a_1, a_2, b_1, b_2 \}$ the new vertices in the path-gadget.
  We show that~$\opt(I) = \opt(I')$.

  $(\ge)$ Let~$H$ be an optimal solution of~$I$.
  Let~$B = A(H) \setminus \{ v_i v_{i+1}, v_{i+1} v_i \mid \allowbreak i \in [h - 1] \}$ be the set of arcs in~$H$ that do have the initial or terminal vertex outside~$V_\text{in}$.
  Let~$B_0 = \{ v_1a_1, a_1v_1, v_ha_2, a_2v_h, b_1a_2, b_2a_1 \}$ be weight-zero arcs inside~$V_\text{new}$ and let~$B_1 = \{ a_1b_1, a_1b_2, a_2b_1, a_2b_2 \}$ be the arcs inside~$V_\text{new}$ that have weight at least one.
  We will construct a solution~$H'$ of~$I'$ such that~$A(H') = B \cup B_0 \cup B_1'$ for some~$B_1' \subseteq B_1$ which we specify later (in a case distinction).
  Thus, in this construction the arcs outside~$V_\text{in}$ and~$V_\text{new}$ remain identical in~$H$ and in~$H'$.
  Hence, it is sufficient to compare the cost for~$V_\text{in}$ in~$H$ and the costs for~$V_\text{new}$ in~$H'$.
  To this end, for~$X \subseteq A(G)$ we define an auxiliary function for the weight of an arc
  \begin{align*}
  \mathbf{w}_{X}(vu) = \begin{cases}
     w(vu) & \text{if } vu \in X, \\
      0 & \text{otherwise}.
     \end{cases}
  \end{align*}
  We define~$\mathbf{w}_{X}'$ analogously for~$w'$ of~$I'$. Using this notation, the cost for~$V_\text{in}$ reads
  \begin{align*}
    \cost(V_\text{in}, A(H), w) 
        ={} & \sum_{i = 1}^{h} \max \{ \mathbf{w}_{A(H)}(v_i v_{i-1}), \mathbf{w}_{A(H)}(v_i v_{i+1}) \} \\
        ={} & \sum_{i = 1}^{h} \mathbf{w}_{A(H)}(v_i v_{i-1}) + \mathbf{w}_{A(H)}(v_{i} v_{i+1}). \\
        ={} & \mathbf{w}_B(v_1v_0) + \mathbf{w}_B(v_hv_{h+1}) \\ & + \sum_{i = 1}^{h - 1} \mathbf{w}_{A(H)}(v_i v_{i+1}) + \mathbf{w}_{A(H)}(v_{i+1} v_{i}).
  \end{align*}
  Here the second equality follows from the assumption that \Cref{reduction:costs} is applied:
  it holds for any~$i \in [h]$ that at least one of~$w(v_i v_{i - 1})$ or~$w(v_{i} v_{i + 1})$ is~0.
  On the other hand, the cost for~$V_\text{new}$ reads
  \begin{align*}
    \cost(V_\text{new}, A(H'), w')
    ={}& \mathbf{w}_{B}(v_1v_0) + \mathbf{w}_{B}(v_{h}v_{h+1}) + \max \{ \mathbf{w}_{B_1'}'(a_1 b_1), \mathbf{w}_{B_1'}'(a_1 b_2) \} \\
    & + \max \{ \mathbf{w}_{B_1'}'(a_2 b_1), \mathbf{w}_{B_1'}'(a_2 b_2) \}. 
  \end{align*}
  We show that~$\delta := \cost(V_\text{in}, A(H), w) - \cost(V_\text{new}, A(H'), w') \ge 0$. 
  We rewrite~$\delta$, by canceling out the terms~$\mathbf{w}_{B}(v_1v_0)$ and~$\mathbf{w}_{B}(v_hv_{h+1})$, as
  \begin{align*}
    \delta &= \cost(V_\text{in}, A(H), w) - \cost(V_\text{new}, A(H'), w') = C_I - C_{I'}
  \end{align*}
  where~$C_I$ and~$C_{I'}$ are given by
  \begin{align*}
    C_I &=  \sum^{h-1}_{i=1} \mathbf{w}_{A(H)}(v_i v_{i+1}) + \mathbf{w}_{A(H)}(v_{i+1} v_i) \text{ and } \\
    C_{I'} &= \max \{ \mathbf{w}_{B_1'}'(a_1 b_1), \mathbf{w}_{B_1'}'(a_1 b_2) \}
    + \max \{ \mathbf{w}_{B_1'}'(a_2 b_1), \mathbf{w}_{B_1'}'(a_2 b_2) \}.
  \end{align*}
  We distinguish between the four cases shown in \Cref{obs:cases}.

  {Case (R).}
  We set~$B_1' := \{ a_1b_1, a_1b_2 \}$.
  Then,~$H'$ is a solution because the connectivity from~$v_1$ to~$v_h$ inside the gadget is preserved.
  Since~$H$ contains arcs~$v_i v_{i+1}$ for all~$i \in [h - 1]$, we have~$C_I \ge C_R$.
  As noted in \Cref{obs:cost}, we have~$C_{I'} = \max \{ w(a_1b_1),  w(a_1b_2) \} \le C_R \le C_I$.

  {Case (L).}
  We set~$B_1' := \{ a_2b_1, a_2b_2 \}$.
  Then, $H'$ is a solution because the connectivity from~$v_h$ to~$v_1$ inside the gadget is preserved.
  Since~$H$ contains arcs~$v_{i+1} v_i$ for all~$i \in [h - 1]$, we have~$C_I \ge C_L$.
  As noted in \Cref{obs:cost}, we have~$C_{I'} = \max \{ w(a_2b_1), w(a_2b_2 )\} \le C_L \le C_I$.

  {Case (B).}
  We set~$B_1' = \{ a_1b_1, a_2b_2 \}$.
  Then,~$H'$ is a solution because the connectivities from~$v_1$ to~$v_h$ and from~$v_h$ to~$v_1$ are both preserved.
  Since~$H$ contains arcs~$v_i v_{i+1}$ and~$v_{i + 1} v_i$ for all~$i \in [h - 1]$, we have~$C_I \ge C_R + C_L$.
  We also have~$C_I' = C_R + C_L$ because the cost for~$a_1$ and~$a_2$ are~$C_R$ and~$C_L$, respectively.

  {Case (N).}
  Here we have~$C_I = C_N$.
  We further distinguish three subcases.
  If~$C_R \le C_N$, then we set~$B_1' := \{ a_1b_1, a_1b_2 \}$.
  Then, $H'$ is a solution with~$C_{I'} = C_R \le C_N$.
  If~$C_L \le C_N$, then we set~$B_1' := \{ a_2b_1, a_2b_2 \}$.
  Then, $H'$ is a solution with~$C_{I'} = C_L \le C_N$.
  Otherwise we set~$B_1' :=\{ a_1 b_2, a_2 b_1 \}$.
  Then, $H'$ is a solution with~$C_{I'} = C_N$.

  $(\le)$ Let~$H'$ now be an optimal solution of~$I'$.
  We can assume that~$H'$ contains all weight-zero arcs in~$B_0$ and some of the non-zero-weight arcs in~$B_1$.
  Let~$B = A(H') \setminus (B_0 \cup B_1)$.
  We construct a solution~$H$ such that~$A(H) \supseteq B$.
  We will give~$H$ by specifying~$B' = A(H) \setminus B$.
  By the same argument as before we compare the following two quantities:
  \begin{align*}
    C_I &=  \sum^{h-1}_{i=1} \mathbf{w}_{B'}(v_i v_{i+1}) + \mathbf{w}_{B'}(v_{i+1} v_i) \text{ and } \\
    C_{I'} &= \max \{ \mathbf{w}_{A(H')}'(a_1 b_1), \mathbf{w}_{A(H')}'(a_1 b_2) \}
    + \max \{ \mathbf{w}_{A(H')}'(a_2 b_1), \mathbf{w}_{A(H')}'(a_2 b_2) \}.
  \end{align*}

  Case 1.
  If there is a path from~$v_1$ to~$v_h$ but no path from~$v_h$ to~$v_1$ inside the path-gadget, then we set~$B' = \{ v_i v_{i+1} \mid i \in [h - 1] \}$.
  Then, $H$ is a solution with~$C_I = C_R$.
  Since~$H'$ contains~$a_1b_1$, it holds that~$C_{I'} \ge C_R$.

  Case 2.
  If there is a path from~$v_h$ to~$v_1$ but no path from~$v_1$ to~$v_h$ inside the path-gadget, then we set~$B' = \{ v_{i+1} v_{i} \mid i \in [h - 1] \}$.
  Then, $H$ is a solution with~$C_I = C_L$.
  Since~$H'$ contains~$a_2 b_2$, it holds that~$C_{I'} \ge C_L$.

  Case 3.
  If there is paths from~$v_1$ to~$v_h$ and from~$v_h$ to~$v_1$ inside the path-gadget, then we set~$B' = \{ v_{i} v_{i+1}, v_{i+1} v_{i} \mid i \in [h - 1] \}$.
  Then, $H$ is a solution with~$C_I = C_R + C_L$.
  Since~$H'$ contains~$a_1 b_1$ and~$a_2 b_2$, it holds that~$C_{I'} \ge C_R + C_L$.

  Case 4.
  If there is neither a path from~$v_1$ to~$v_h$ nor from~$v_h$ to~$v_1$ inside the path-gadget, then we set~$B' = \{ v_{i} v_{i+1}, v_{i+1} v_{i} \mid i \in [h - 1], i \ne k \}$, where~$k$ is~$\arg\max_{i \in [h - 1]} w(v_i v_{i+1}) + w(v_{i+1} v_i)$.
  Then, $H$ is a solution with~$C_I = C_N$.
  Since~$H'$ neither contains~$a_1 b_1$ nor~$a_2 b_2$ in this case, it must contain~$a_1 b_2$ and~$a_2 b_2$ so that~$b_1$ and~$b_2$ can be reached in~$H'$.
  As noted in \Cref{obs:cost}, we have~$C_{I'} \ge w(a_1 b_2) + w(a_2 b_1) \ge C_N$.

  \emph{Running time} 
  To find maximal induced paths, we start with a degree-two vertex and traverse in both directions until a vertex with degree at least 3 is discovered.
  If the maximal induced path contains at least 7 inner vertices, then we replace it with a gadget with appropriate weights.
  The algorithm spends a constant time for each inner vertex in the maximal induced path.
  Since inner vertices of maximal induced paths are pairwise disjoint, this procedure applies \Cref{reduction:deg2paths} exhaustively in linear time.
\end{proof}

\begin{remark}
    \label{rem:LargeRing}
    \Cref{reduction:deg2paths} cannot be applied when~$\uug{G}$ is a large cycle because there is no vertex with degree 3 or larger.
    However, if~$\uug{G}$ is a cycle, then we can easily compute a solution: 
    Let~$v \in V(G)$ be an arbitrary vertex.
    Then, compute costs corresponding to the cases (R), (L), and (N) with~$v_1 = v_h = v$ (see \Cref{fig:cases}).
    Take the cheapest solution found. 
\end{remark}

We have so far shown a reduction rule to remove all degree-one vertices and a gadget to replace every maximal induced path with a fixed number of vertices.
As shown in previous work~\cite{BBNN17}, this is sufficient to obtain a linear-size kernel.

\begin{proposition}[\cite{BBNN17}]
  \label{lem:feedbVertices}
  Any undirected graph~$G$ without degree-one vertices contains at most~$2g-2$ vertices of degree at least three, where~$g$ is the feedback edge number of~$G$.
\end{proposition}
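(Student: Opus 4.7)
The plan is to combine the handshake lemma with the well-known identity relating the feedback edge number to the number of vertices, edges, and components. First I would let $n$ denote the number of vertices of $G$, $m$ its number of edges, and $c$ its number of connected components. Removing a minimum feedback edge set leaves a spanning forest on $n$ vertices with exactly $n-c$ edges, hence $m - g = n - c$, which gives the key identity $g = m - n + c$.

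Next I would apply the handshake lemma to lower-bound $2m$. Since $G$ has no vertices of degree one, the non-isolated vertices of $G$ split into those of degree exactly two and those of degree at least three; let $V_{\geq 3}$ denote the latter set. Writing $n'$ for the number of non-isolated vertices (so $n' = |V_2| + |V_{\geq 3}|$) and $c'$ for the number of non-trivial connected components, the degree bound yields
\[
  2m = \sum_{v \in V(G)} \deg_G(v) \;\geq\; 2n' + |V_{\geq 3}|.
\]
Rearranging and substituting the feedback edge identity restricted to the non-isolated part of $G$ (that is, $g = m - n' + c'$, because isolated vertices affect neither $m$ nor $g$ and contribute equally to both sides of the count), I obtain
\[
  |V_{\geq 3}| \;\leq\; 2(m - n') \;=\; 2(g - c').
\]

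Finally I would observe that the statement is only nontrivial when $V_{\geq 3}$ is nonempty, in which case $G$ contains at least one non-trivial component, so $c' \geq 1$, yielding $|V_{\geq 3}| \leq 2g - 2$. The only mildly delicate point — and arguably the main ``obstacle'' — is bookkeeping around isolated components and components that are cycles (which contain no degree-$\geq 3$ vertex but contribute to $c'$ and to $g$); once these are handled by focusing on the non-isolated subgraph, the bound drops out immediately from the two ingredients above.
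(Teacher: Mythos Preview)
Your proof is correct. The paper does not include its own proof of this proposition; it simply cites it from~\cite{BBNN17}, where the argument is precisely the one you outline: combine the identity $g = m - n + c$ with the handshake inequality $2m \ge 2n' + |V_{\ge 3}|$ (valid because every non-isolated vertex has degree at least two) to obtain $|V_{\ge 3}| \le 2(g - c') \le 2g - 2$ whenever some vertex of degree at least three exists.
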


\begin{proposition}[\cite{BBNN17}]
  \label{lem:feedbDeg2Paths}
  Any connected undirected graph~$G$ without degree-one vertices consists of at most~$3g-3$ maximal induced paths, where~$g \ge 2$ is the feedback edge number of~$G$.
\end{proposition}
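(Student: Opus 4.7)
The plan is to compress $G$ into a multigraph $G'$ by replacing the inner vertices of each maximal induced path with a single edge between its two endpoints. The vertex set $V(G')$ then consists exactly of those vertices of $G$ with degree at least three, and the edges of $G'$ are in bijection with the maximal induced paths of $G$. We allow loops when a maximal induced path's two endpoints coincide, and parallel edges when two distinct maximal induced paths share both endpoints, so that the bijection is clean.

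The first step is to confirm $V(G') \neq \emptyset$. Since $G$ is connected, has no degree-one vertices, and $g \ge 2$, it cannot be a single cycle (that would yield $g = 1$), nor a tree (which would have leaves). Hence $G$ has at least one vertex of degree at least three, so $V(G')$ is nonempty. By the preceding Proposition~\ref{lem:feedbVertices}, $|V(G')| \le 2g - 2$.

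Next I would argue that the feedback edge number of $G'$ (counted in the multigraph sense) is still $g$. Informally, suppressing a degree-two vertex along a path is the inverse of subdividing an edge, an operation that preserves the cyclomatic number $|E| - |V| + (\text{number of components})$. Since $G'$ inherits connectedness from $G$, one obtains $|E(G')| = |V(G')| - 1 + g$, and substituting $|V(G')| \le 2g - 2$ gives $|E(G')| \le 3g - 3$. Because the edges of $G'$ correspond bijectively to the maximal induced paths of $G$, this is precisely the claimed bound.

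The main subtlety I expect is handling the multigraph features carefully: a loop in $G'$ corresponds to a maximal induced path whose two endpoints coincide (contributing $1$ to the feedback edge number on its own), and each parallel edge beyond the first in a multi-edge contributes a further $1$. Both correspond to genuine independent cycles in $G$, so the cyclomatic identity $|E(G')| = |V(G')| - 1 + g$ remains valid once one counts edges as a multiset. This bookkeeping is the only nontrivial point; everything else is a direct substitution using Proposition~\ref{lem:feedbVertices}.
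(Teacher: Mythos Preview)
Your argument is correct. Note, however, that the present paper does not supply its own proof of this proposition: it is quoted verbatim from \cite{BBNN17} and used as a black box. There is therefore nothing in the paper to compare your approach against.

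On the substance, your route via the contracted multigraph $G'$ and the cyclomatic identity $|E(G')| = |V(G')| - 1 + g$ is the standard one and goes through cleanly. The only points worth tightening are exactly the ones you flag yourself: (i) confirming that $G$ has at least one vertex of degree $\ge 3$ (your argument that a connected graph with minimum degree $\ge 2$ and $g \ge 2$ cannot be a single cycle is fine), and (ii) verifying that suppressing degree-two vertices preserves the cyclomatic number even in the presence of resulting loops and parallel edges. For (ii), it suffices to observe that each suppression step removes one vertex and one edge, leaving $|E|-|V|$ unchanged, and connectedness is clearly preserved; loops and multi-edges pose no difficulty for the formula $|E|-|V|+1$ in a connected multigraph. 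With those two sentences made explicit, the proof is complete.
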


We use the two propositions above to prove the main theorem of this section.

\begin{proof}[Proof of \Cref{theorem:kernelization}]
  Let~$I = (G, w)$ be an instance of \textsc{MinPAC} with feedback edge number~$g$.
  We apply \Cref{reduction:costs,reduction:degree1} exhaustively to obtain~$I' = (G', w')$, in which there is no degree-one vertex.
  We then obtain~$I'' = (G'', w'')$, in which the inner vertices of each maximal induced path is replaced with a path-gadget using \Cref{reduction:deg2paths}.
  It follows from \Cref{lemma:cost,lemma:degree1correct,lemma:deg2paths} that this transformation is correct and can be done in linear time.

  We show that~$G''$ has at most~$20g-20$ vertices and~$42g-42$ arcs.
  It follows from \Cref{lem:feedbVertices,lem:feedbDeg2Paths} that there are at most~$2g - 2$ vertices of degree at least three and~$3g - 3$ maximal induced paths in~$\uug{G'}$.
  After the exhaustive application of~$\Cref{reduction:deg2paths}$, each maximal induced path~$(v_0, \dots, v_{h+1})$ contains at most 6 inner vertices and 14 arcs (including~$v_0 v_1, v_1 v_0, v_h v_{h+1}, v_{h+1} v_h$).
  Thus, $G''$ contains at most~$2g - 2+ 6 \cdot (3g - 3)=20g-20$ vertices and at most~$14 \cdot (3g - 3) = 42g - 42$ arcs. 
  Note that we count edges between vertices of degree at least three as a maximal induced paths with no inner vertex. 
\end{proof}
}

We can finally again use \Cref{thm:weights} to bound the weights and hence arrive at the following result.

\begin{corollary}
    \label{cor:poly-kernel-fes}
    \textsc{MinPAC} admits a polynomial-size kernel with respect to the feedback edge number.
\end{corollary}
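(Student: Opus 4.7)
The plan is to obtain the polynomial-size kernel by composing the two theorems already available in this section. First I would apply \Cref{theorem:kernelization} to an input instance~$I=(G,w)$ with feedback edge number~$g$. In linear time this produces an equivalent instance~$I'=(G',w')$ together with an offset~$d \in \N$ where~$G'$ has~$n' \le 20g-20$ vertices and~$m' \le 42g-42$ arcs. At this point the number of vertices and arcs is already linear in~$g$, but the weights in~$w'$ are still arbitrary integers inherited from~$w$ (the reduction rules only subtract or copy weights), so the total encoding size of~$I'$ is not yet bounded by any function of~$g$ alone.

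Next I would apply \Cref{thm:weights} to the shrunken instance~$I'$. This is a polynomial-time procedure that replaces~$w'$ with a new weight function~$\hat w$ such that every optimal solution of~$(G',w')$ is still optimal for~$(G',\hat w)$ and such that~$\|\hat w\|_\infty < 2^{4(m')^3}\,(4n'm'+1)^{m'(m'+2)}$. Plugging in the bounds~$n' \le 20g-20$ and~$m' \le 42g-42$, the right-hand side becomes~$2^{O(g^3)} \cdot (O(g^2))^{O(g^2)} = 2^{O(g^3)}$, so each weight can be encoded using~$O(g^3)$ bits. Since~$G'$ has only~$O(g)$ arcs, the total encoding size of~$(G',\hat w)$ is~$O(g^4)$, which is polynomial in~$g$. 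The offset~$d$ is likewise a single integer computed in polynomial time and can be appended to the output.

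The only delicate point is correctness of the composition: after applying \Cref{thm:weights} we need that an optimal solution of the final weighted graph corresponds, via the offset~$d$ accumulated in \Cref{theorem:kernelization}, to an optimal solution of the original instance~$I$. This follows directly from the two statements: \Cref{theorem:kernelization} gives~$\opt(I)=\opt(I')+d$ and preserves the structure of solutions, while \Cref{thm:weights} guarantees that the set of optimal solutions is unchanged when switching from~$w'$ to~$\hat w$ on the fixed graph~$G'$. Combining these two equalities yields that the output instance together with~$d$ is an equivalent \textsc{MinPAC} instance of size polynomial in~$g$, which is the desired kernel. I do not foresee a real obstacle here; the work has effectively been done in \Cref{theorem:kernelization,thm:weights}, and this corollary is the straightforward synthesis.
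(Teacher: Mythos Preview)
Your proposal is correct and follows exactly the approach the paper takes: apply \Cref{theorem:kernelization} to bound the graph size and then \Cref{thm:weights} to bound the weights. The paper's own justification for this corollary is in fact a single sentence pointing to \Cref{thm:weights}; your argument spells out the size calculation and the correctness of the composition in more detail than the paper does, but the underlying idea is identical.
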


\section{Parameterized hardness} \label{section:hardness}

In this section we present several hardness results for \textsc{MinPAC}.
To this end, we consider the decision variant of \textsc{MinPAC}.

\decProb{\textsc{Power Asymmetric Connectivity (PAC)}}
{A strongly connected graph~$G$, arc weights~$w\colon A(V) \to \N$, and a budget~$k \in \N$.}
{Is there a strongly connected spanning subgraph~$H$ of~$G$, such that~$\cost(V(G), A(H), w) \leq k$?}

We prove that \textsc{PAC} remains NP-hard even if the \emph{feedback arc number} is~1.
This complements the result in \Cref{sec:fen}, where we showed that \textsc{MinPAC} parameterized by the feedback edge number admits an FPT algorithm via a kernelization.
Recall that the feedback arc number for a directed graph is the minimum number of arcs that have to be removed to make it a directed acyclic graph.
Furthermore, we show that \textsc{PAC} is W[2]-hard with respect to the solution cost~$k$.
We also show that \textsc{PAC} cannot be solved in subexponential time in the number of vertices assuming the Exponential Time Hypothesis (ETH)~\cite{Impagliazzo2001}, which states that \textsc{3-Sat} cannot be solved in~$2^{o(n+m)}$ time, where~$n$ and~$m$ are the number of variables and clauses in the input formula.
Summarizing we show the following.

\begin{theorem} \label{theo:hardness} 
  Even if each arc weight is either one or zero and the feedback arc number is 1,
  \begin{enumerate}[align=left,label={(\roman*)}]
    \setlength{\itemindent}{.5em}
    \setlength{\labelwidth}{1.5em}
    \item \textsc{PAC} is NP-hard, \label{hardness:2} %
    \item \textsc{PAC} is W[2]-hard when parameterized by the solution cost~$k$, and \label{hardness:1}
    \item \textsc{PAC} is not solvable in~$2^{o(n)}$ time, unless the ETH fails. \label{hardness:3}
  \end{enumerate}
\end{theorem}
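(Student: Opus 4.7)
The plan is to reduce from \textsc{Set Cover}, which simultaneously gives all three claims: \textsc{Set Cover} is classically NP-hard, W[2]-hard parameterized by the cover size, and (via standard linear-size reductions from \textsc{3-Sat}) admits no $2^{o(n+m)}$-time algorithm under ETH, where $n$ and $m$ are the number of elements and sets in the Set Cover instance.

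Given a \textsc{Set Cover} instance with universe $\{u_1,\dots,u_n\}$, sets $S_1,\dots,S_m$, and budget $k$, I would construct a PAC-instance on the vertex set $\{s,t\}\cup\{x_i\mid i\in[m]\}\cup\{y_j\mid j\in[n]\}$ with the following arcs: weight-$0$ arcs $s\to x_i$ for every $i$, $x_i\to t$ for every $i$, $y_j\to t$ for every $j$, and a single back-arc $t\to s$; and weight-$1$ arcs $x_i\to y_j$ whenever $u_j\in S_i$. All arc weights lie in $\{0,1\}$, and removing the arc $t\to s$ leaves a DAG whose topological order is $s, x_1,\dots,x_m, y_1,\dots,y_n, t$, so the feedback arc number of the constructed graph is exactly~$1$; strong connectivity follows from the back-arc together with a forward path through some $x_i$ for each $y_j$.

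The correctness of the reduction would rest on an arc-forcing argument. In any strongly connected spanning subgraph $H$, the arcs $s\to x_i$ (the only in-arc of each $x_i$), $y_j\to t$ (the only out-arc of each $y_j$), and $t\to s$ (the only in-arc of $s$) are all forced, so $s$, $t$, and every $y_j$ contribute $0$ to the cost. Each $x_i$ contributes $0$ if its only out-arc in $H$ is $x_i\to t$ and $1$ otherwise; and for every $y_j$ to be reachable in $H$, at least one arc $x_i\to y_j$ with $u_j\in S_i$ must lie in $H$. Thus the set of $x_i$ that contribute $1$ forms a set cover, and conversely a size-$\ell$ set cover yields an $H$ of cost exactly $\ell$ (``active'' $x_i$ take all their weight-$1$ out-arcs, ``inactive'' $x_i$ take only $x_i\to t$). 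Hence PAC with budget $k$ is equivalent to the given \textsc{Set Cover} instance with budget $k$: items (i) and (ii) then follow directly from the corresponding hardness results for \textsc{Set Cover}, and since the produced graph has $N=n+m+2$ vertices, a hypothetical $2^{o(N)}$-time PAC algorithm would solve \textsc{Set Cover} in $2^{o(n+m)}$ time and hence contradict ETH, giving (iii).

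The main obstacle I expect is the tension between feedback arc \emph{number}~$1$ and a faithful encoding of \textsc{Set Cover}: strong connectivity on top of a DAG plus one back-arc essentially forces a Hamiltonian-path-like backbone whose sole back-arc must be the ``reset'' $t\to s$, leaving very little room to make different vertices contribute different costs. The weight-$0$ bypass arcs $x_i\to t$ are exactly what resolves this, letting inactive set-vertices pay zero so that the total cost localizes to the chosen sets and matches the cover size.
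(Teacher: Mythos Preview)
Your proposal is correct and is essentially the same reduction the paper uses: the paper's Reduction~5.3 builds exactly your graph (with $v_S$ in place of your $x_i$ and $v_u$ in place of your $y_j$), with the same weight-$0$ backbone $s\to v_S\to t$, $v_u\to t$, $t\to s$ and weight-$1$ arcs $v_S\to v_u$ for $u\in S$, and proves equivalence to \textsc{Set Cover} with $k=\ell$ in the same way. Your justification that the feedback arc number is~$1$ (delete $t\to s$ to obtain a DAG) and your derivation of all three hardness items from the corresponding \textsc{Set Cover} results match the paper's argument.
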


It follows from \Cref{theo:hardness} \ref{hardness:1} that there (presumably) is no algorithm solving \textsc{PAC} running in~$f(k) \cdot n^{O(1)}$ time.
Nonetheless, a simple brute-force algorithm solves \textsc{PAC} in~$n^{\theta(k)}$ time, certifying that \textsc{PAC} is in the class XP with respect to the parameter solution cost.
In order to prove the claims of \Cref{theo:hardness}, we use a reduction from the well-studied \textsc{Set Cover} problem.

\decProb{\textsc{Set Cover}}
{A universe~$U = \{u_1, \ldots , u_n\}$, a set family~$ \mathcal{F} = \{ S_1, \dots , S_m \}$ containing sets~$S_i \subseteq U$, and~$\ell \in \N$.}
{Is there a size-$\ell$ \emph{set cover}~$\mathcal{F}' \subseteq \mathcal{F}$ (that is, $\bigcup_{S \in \mathcal{F}'} S = U$)?}

\textsc{Set Cover} is NP-hard and W[2]-hard with respect to the solution size~$\ell$~\cite{downey2013} and is not solvable in~$2^{o( \vert \mathcal{U} \vert + \vert \mathcal{F} \vert)}$ time unless the ETH fails \cite{paturi2001}.

For the reduction, we use one vertex for each element and each subset and one arc to represent the membership of an element in a subset.
The construction resembles the one used in \textsc{Min-Power Symmetric Connectivity}~\cite{BBNN17}.

\begin{reduction} \label{red1}
  Given an instance~$I = (U, \mathcal{F}, \ell)$ of \textsc{Set Cover}, we construct an instance~$I' = (G, w, k = \ell)$ of \textsc{PAC} as follows.
  We introduce a vertex~$v_u$ for every~$u \in U$, a vertex~$v_S$ for every~$S \in \mathcal{F}$, and two additional vertices~$s$ and~$t$.
  We construct a graph such that~$V(G) = \{ s, t \} \cup V_U \cup V_\mathcal{F}$ where~$V_U = \{ v_u \mid u \in U\}$ and~$V_\mathcal{F} = \{ v_S \mid S \in \mathcal{F} \}$.
  For the arcs we first add an arc~$ts$ of weight 0.
  We then add arcs~$s v_S$ and~$v_S t$ of weight 0 for every~$S \in \mathcal{F}$ and an arc~$v_u t$ of weight 0 for every~$u \in U$.
  For every~$S \in \mathcal{F}$ and every~$u \in S$ we finally add an arc~$v_S v_u$ of weight 1.
\end{reduction}

\Cref{fig:red} illustrates the reduction to \textsc{PAC}.
We can assume that arcs of weight zero (bold arcs in the figure) are part of the solution.
The idea is that in order to obtain a strongly connected subgraph, one has to select at least one incoming arc for each vertex in~$V_U$ such that only~$k$ vertices in~$V_{\mathcal{F}}$ have outgoing arcs that are selected.

\begin{figure}[t] 
  \centering
  \tikzset{svertex/.style = {shape=circle,draw,inner sep = 2pt}}
  \tikzset{dotvertex/.style = {shape=circle,inner sep = 2pt}}
  \begin{tikzpicture}[scale=0.8]

    \node[svertex] (s) at (0, 0) {};

    \node[svertex] (S1) at (1.5, 0) {};
    \node[svertex] (S2) at (2.5, 0) {};

    \node[svertex] (U1) at (4, 0) {};
    \node[svertex] (U2) at (5, 0) {};
    \node[svertex] (U3) at (6, 0) {};

    \node[svertex] (t) at (7.5, 0) {};

    \node at (0, 0.5) {$s$};
    \node at (7.5, 0.5) {$t$};

    \path[->, very thick] (t) edge [bend left=100] node {} (s);
    \path[->, very thick] (s) edge node {} (S1);
    \path[->, very thick] (s) edge [bend right=25] node {} (S2);
    \path[->] (S2) edge node {} (U1);
    \path[->] (S2) edge [bend left=25] node {} (U2);
    \path[->] (S1) edge [bend left=50] node {} (U2);
    \path[->] (S1) edge [bend left=75] node {} (U3);
    \path[->, very thick] (S1) edge [bend right=80] node {} (t);
    \path[->, very thick] (S2) edge [bend right=60] node {} (t);
    \path[->, very thick] (U1) edge [bend right=40] node {} (t);
    \path[->, very thick] (U2) edge [bend right=20] node {} (t);
    \path[->, very thick] (U3) edge node {} (t);

    \begin{scope}[on background layer]
      \node[draw = black, dashed, rectangle, rounded corners, minimum height = 0.8cm, minimum width = 1.5cm] at (2, 0){};
      \node[draw = black, dashed, rectangle, rounded corners, minimum height = 0.8cm, minimum width = 2.5cm] at (5, 0){};  
      \node[text width=1cm] at (1.5, 1) {$V_\mathcal{F}$};
      \node[text width=1cm] at (7, 1) {$V_U$};
    \end{scope}
  \end{tikzpicture}
    \caption{Illustration of \Cref{red1} on a \textsc{Set Cover} instance with universe~$U = \{ 1, 2, 3 \}$ and set family~$\mathcal{F} = \{ \{ 2 ,3 \}, \{ 1, 2\} \}$. Bold arcs denote arcs of weight 0 and other arcs have weight 1. }
    \label{fig:red}
\end{figure}
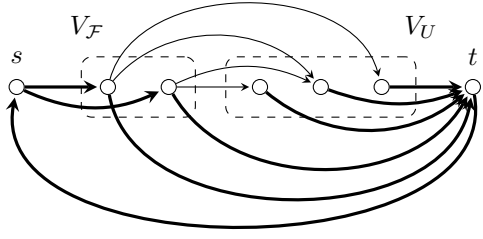

{
To prove the W[2]-hardness, we have to verify that the given reduction is indeed a \emph{parameterized reduction}.

\begin{definition} \label{Pred}
  A \emph{parameterized reduction} from a parameterized problem~$\Pi \subseteq \Sigma^* \times \Sigma^*$ to a parameterized problem~$\Pi' \subseteq \Sigma^* \times \Sigma^*$ is a function which maps any instance~$(I,p) \in \Sigma^* \times \Sigma^*$ to another instance~$(I',p')$ such that
  \begin{enumerate}[align=left,label={(\roman*)}]
    \setlength{\itemindent}{.5em}
    \setlength{\labelwidth}{1.7em}
    \item $(I',p')$ can be computed from~$(I,p)$ in~$f(p) \cdot \vert I \vert ^{O(1)}$ time for some computable function~$f$, \label{Pred:1}
    \item $p' \leq g(p)$ for some computable function~$g$, and \label{Pred:2}
    \item $(I,p) \in \Pi$ if and only if~$(I',p') \in \Pi'$. \label{Pred:3}
  \end{enumerate}
\end{definition}

\begin{lemma} \label{hardness:parameterreduction}
  \Cref{red1} is a parameterized reduction from \textsc{Set Cover} parameterized by the solution size to \textsc{PAC} parameterized by the solution cost.
\end{lemma}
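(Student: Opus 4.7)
The plan is to verify the three conditions of \cref{Pred} in turn for \cref{red1}, under the mapping $k := \ell$.

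Conditions \ref{Pred:1} and \ref{Pred:2} I would dispatch by a direct count: the constructed graph~$G$ has $|U| + |\mathcal{F}| + 2$ vertices and $1 + 2|\mathcal{F}| + |U| + \sum_{S \in \mathcal{F}} |S|$ arcs, all enumerable in polynomial time in~$|I|$. Since the new parameter is $k = \ell$, condition \ref{Pred:2} holds with~$g$ the identity.

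The bulk of the argument lies in condition \ref{Pred:3}, the equivalence of yes-instances. My first step here would be to establish the following structural observation: every weight-zero arc may be assumed to lie in any solution~$H$ without affecting its cost, so $\cost(V(G), A(H), w)$ equals exactly the number of vertices $v_S \in V_{\mathcal{F}}$ that have at least one outgoing arc $v_S v_u$ in $A(H)$. Moreover, the only arcs of~$G$ entering a vertex $v_u \in V_U$ have the form $v_S v_u$ with $u \in S$, so strong connectivity of~$H$ forces at least one such arc into each~$v_u$.

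From this observation both directions of \ref{Pred:3} fall out. Given a size-$\ell$ set cover $\mathcal{F}'$, I would build $H$ from all weight-zero arcs together with $v_S v_u$ for every $S \in \mathcal{F}'$ and every $u \in S$; strong connectivity follows from the star structure around $s$ and $t$, and the cost equals $|\mathcal{F}'| = k$. Conversely, given $H$ of cost at most~$k$, setting $\mathcal{F}' := \{ S \in \mathcal{F} \mid \exists u \in S.\ v_S v_u \in A(H) \}$ yields $|\mathcal{F}'| \le k = \ell$ by the structural observation, and the forced incoming arcs guarantee that every $u \in U$ is covered. The main (and fairly minor) obstacle is the structural observation itself: one must argue cleanly that attention can be restricted to solutions in which each $v_S$ contributes cost~$1$ precisely when it has a selected $v_S v_u$-arc, so that cost and the size of the induced cover coincide exactly.
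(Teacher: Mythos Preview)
Your proposal is correct and follows essentially the same route as the paper: verify conditions \ref{Pred:1} and \ref{Pred:2} by direct inspection, then prove \ref{Pred:3} by building~$H$ from a set cover via all weight-zero arcs plus the arcs~$v_S v_u$ for~$S \in \mathcal{F}'$, and conversely extracting~$\mathcal{F}'$ as those~$S$ with some~$v_S v_u \in A(H)$. Your explicit structural observation (that the cost equals the number of~$v_S$ with a selected weight-one arc) is a clean way to phrase what the paper leaves implicit, and your arc count is in fact more accurate than the paper's stated~$O(|U|+|\mathcal{F}|)$ running time.
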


\begin{proof}
  To prove that \Cref{red1} is a parameterized reduction, we verify \Cref{Pred}~\ref{Pred:1} to \ref{Pred:3}.
  Observe that \Cref{red1} can be done in~$O(|U| + |\mathcal{F}|)$ time, which satisfies \Cref{Pred}~\ref{Pred:1}.
  \Cref{Pred}~\ref{Pred:2} is clearly satisfied.
  For \Cref{Pred}~\ref{Pred:3}, we show that~$I$ has a set cover of size at most~$\ell$ if and only if~$G$ has a strongly connected subgraph~$H$ of cost is at most~$\ell$.

  ($\Rightarrow$)
  Let~$\mathcal{F'} \subseteq \mathcal{F}$ be a set cover of size at most~$\ell$.
  Let~$B_0 = \{ s v_S, v_S t \mid S \in \mathcal{F} \} \cup \{ v_u t \mid u \in U \}$ be the arcs of weight 0.
  We claim that~$H = (V(G), B_0 \cup \{ v_{S'} v_{u'} \mid S' \in \mathcal{F}', u' \in S' \})$ is a solution with cost~$\ell$.
  Since~$\mathcal{F}'$ is a set cover, there exists at least one incoming arc in~$H$ for any vertex in~$V_U$.
  Thus, $H$ is strongly connected.
  Since the cost for~$v_{S'}$ is 1 for any~$S' \in \mathcal{F}'$ and the costs for other vertices are 0, the cost of~$H$ is at most~$\ell$.

  ($\Leftarrow$)
  Let~$H$ be a strongly connected subgraph of cost at most~$\ell$.
  Let~$\mathcal{F}' = \{ S \mid \exists u.\; v_S v_u \in A(H) \}$.
  Then, $\mathcal{F}'$ is a set cover because there is at least one incoming arc in~$H$ for any~$v_u \in V_U$.
  Since the cost of~$H$ is at most~$\ell$, we have~$|\mathcal{F}'| \le \ell$.
\end{proof}

Now we can prove the statements of the theorem.

\begin{proof}[Proof of \Cref{theo:hardness}]
  \Cref{theo:hardness}~\ref{hardness:1} follows from \Cref{hardness:parameterreduction} because \textsc{Set Cover} is W[2]-hard when parameterized by the solution size \cite{downey2013}.
  For \Cref{theo:hardness}~\ref{hardness:2}, observe that \Cref{red1} is a polynomial-time reduction from \textsc{Set Cover} and the constructed graph has a feedback arc set of size 1.
  For \Cref{theo:hardness}~\ref{hardness:3}, observe that the constructed graph of \Cref{red1} has~$O(|U| + |\mathcal{F}|)$ vertices. Since \textsc{Set Cover} cannot be solved in~$2^{o(\vert U \vert + \vert \mathcal{F} \vert)}$ time assuming ETH \cite{paturi2001}, \Cref{theo:hardness}~\ref{hardness:3} follows.
\end{proof}
\begin{remark}
    We remark that having arcs of weight zero is essential for the W[2]-hardness in \cref{theo:hardness}~\ref{hardness:1}:
    If~$\min_{vu \in A(G)} w(vu) \ge 1$ for any~$v \in V(G)$, then \textsc{PAC} is trivially FPT with respect to the solution cost (as the cost is at least~$n$).
    However, even if~$\min_{vu \in A(G)} w(vu) \ge 1$, \textsc{PAC} is still W[2]-hard with respect to the \emph{above lower bound}~$k - \sum_{v \in V(G)} \min_{vu \in A(G)} w(vu)$ (this follows from a modification to \Cref{red1} where every arc weight is increased by one).
\end{remark}
}

\section{Conclusion}

We started the investigation of the parameterized complexity of \textsc{MinPAC}, leading to first tractability and intractability results.
We remark that our algorithms run in linear time when the respective parameters are bounded.
Thus we believe that our results are worthwhile for empirical experiments. 
There are also several theoretical challenges for future work:
Can the running time of the parameterized algorithm with respect to the number~$c$ of SCCs in the obligatory subgraph be improved to single-exponential?
Resolving the parameterized complexity of \textsc{MinPAC} with respect to the single parameter vertex cover number is another task for future work.
Finally, problem variants where the solution graph is not only required to be strongly connected but needs to have at most a certain diameter might be interesting (theoretically and from an application point of view where the number of hops for communication should be limited).

\bibliography{ref}
\bibliographystyle{plainnat} 

\end{document}